\documentclass[10pt]{article}
\usepackage{fullpage}
\usepackage{graphicx}
\usepackage{amsmath}
\usepackage{amsfonts}
\usepackage{amssymb}
\usepackage{amsthm}
\usepackage{algpseudocode}
\usepackage{algorithm}

\begin{document}

\title{\Large\bf Refining Approximating Betweenness Centrality Based on Samplings}
\author{Shiyu Ji, Zenghui Yan\\
\{shiyu,zyan\}@cs.ucsb.edu
}
\date{}
\maketitle

\newtheorem{definition}{Definition}
\newtheorem{theorem}{Theorem}
\newtheorem{lemma}{Lemma}
\newtheorem{corollary}{Corollary}

\newcommand{\p}{\mathrm{Pr}}

\begin{abstract}
Betweenness Centrality (BC) is an important measure used widely in complex network analysis, such as social network, web page search, etc. Computing the exact BC values is highly time consuming. Currently the fastest exact BC determining algorithm is given by Brandes, taking $O(nm)$ time for unweighted graphs and $O(nm+n^2\log n)$ time for weighted graphs, where $n$ is the number of vertices and $m$ is the number of edges in the graph. Due to the extreme difficulty of reducing the time complexity of exact BC determining problem, many researchers have considered the possibility of any satisfactory BC approximation algorithms, especially those based on samplings. Bader et al. give the currently best BC approximation algorithm, with a high probability to successfully estimate the BC of one vertex within a factor of $1/\varepsilon$ using $\varepsilon t$ samples, where $t$ is the ratio between $n^2$ and the BC value of the vertex. However, some of the algorithmic parameters in Bader's work are not yet tightly bounded, leaving some space to improve their algorithm. In this project, we revisit Bader's algorithm, give more tight bounds on the estimations, and improve the performance of the approximation algorithm, i.e., fewer samples and lower factors. On the other hand, Riondato et al proposes a \emph{non-adaptive} BC approximation method with samplings on shortest paths. We investigate the possibility to give an \emph{adaptive} algorithm with samplings on pairs of vertices. With rigorous reasoning, we show that this algorithm can also give bounded approximation factors and number of samplings. To evaluate our results, we also conduct extensive experiments using real-world datasets, and find that both our algorithms can achieve acceptable performance. We verify that our work can improve the current BC approximation algorithm with better parameters, i.e., higher successful probabilities, lower factors and fewer samples.
\end{abstract}

\section{Introduction}
In the past a few decades, how to evaluate the importance of each vertex in a network has attracted extensive researchers \cite{barthelemy2004betweenness}. It is still a popular open problem that how to efficiently find the vertex who plays a central role given the network structure. Many measures of centrality have been proposed: Closeness Centrality \cite{sabidussi1966centrality}, Graph Centrality \cite{hage1995eccentricity}, Stress Centrality \cite{shimbel1953structural}, Betweenness Centrality \cite{freeman1977set}. Recently Betweenness Centrality (BC) has been popularly researched \cite{bader2007approximating, brandes2008variants, abboud2015subcubic}, and has many theoretical applications, e.g., fully dynamic graphs \cite{lee2016efficient}, all pairs shortest paths (APSP) and diameter problems \cite{abboud2015subcubic}, clustering \cite{fairbanks2015behavioral}, etc., as well as many real-world applications, e.g., delay-tolerant networks \cite{magaia2015betweenness}, deltaic river network \cite{cui2015assessment}, hot spots on crime data networks \cite{sivaranjani2015mitigating}, etc.

Betweenness is a measure of centrality based on single source shortest paths (SSSP). We first give its definition.
\begin{definition}
\label{def:bc}
{\bf Betweenness Centrality} \cite{freeman1977set, brandes2001faster, bader2007approximating} of a vertex $v$ in the graph $G=(V,E)$ is
\begin{equation}
BC(v) = \sum_{s,t\in V, s\not=t\not=v} \delta_{st}(v),
\end{equation}
where $\delta_{st}(v)$ is called {\bf pair-dependency} of vertex $v$ given a pair $(s,t)$, and is defined as
\begin{equation}
\delta_{st}(v) = \frac{\sigma_{st}(v)}{\sigma_{st}},
\end{equation}
in which $\sigma_{st}(v)$ denotes the number of the shortest paths from $s$ to $t$ that pass through $v$, and $\sigma_{st}$ denotes the total number of the shortest paths from $s$ to $t$ ($v$ does not have to be on the paths).
\end{definition}
By the definition it is clear that the BC value must be between 0 and $n^2$.

We summarize our major contributions in this paper:
\begin{itemize}
\item We revisit Bader's work and give our new sampling algorithm based on a mixed idea between Bader's and Riondato's works. Analysis and evaluation results are given.
\item We find the bounds in Bader's paper \cite{bader2007approximating} are sub-optimal and some lemmas have gaps, e.g., 1) the proof of Lemma 1 is not given, and 2) in the proof of Lemma 3, the following incorrect argument was used:
$$
\Pr[\left(X_1 - \frac{A}{n}\right) + \cdots + \left(X_k - \frac{A}{n}\right) \geq cn-\epsilon n] 
\leq \sum_i \Pr[X_i - \frac{A}{n} \geq cn-\epsilon n].
$$
Clearly the sum of $n$ random variables exceeds a value $S$ does not imply one of these random variable exceeds this value $S$.
\item To repair the theory based on our findings above, we develop a new probability theoretic model that can well explain the results in Bader's paper. Our results not only agrees with the existing research works mostly, but also gives more refined bounds and better performance, e.g., lower factors and higher successful probabilities.
\end{itemize}

We organize this paper as follows. In the related works, we will discuss the most important existing techniques of finding the exact or estimated BC values given a large graph. Then we will revisit Bader's algorithm and give our new algorithm sampling on pairs of vertices. For both algorithms, we will rigorously show the results that can achieve improved bounds on number of samplings and approximation factors. To evaluate the performance, we will discuss our experimental results in Section \ref{sec:exp}. Finally, future works and conclusions will be presented

\section{Related Works}
Brandes \cite{brandes2001faster, brandes2008variants} gives the state-of-the-art fastest algorithm to calculate the exact value of BC for each vertex in the graph. Given a graph $G=(V, E)$, Brandes' algorithm takes at least $O(nm)$ time for unweighted graphs and $O(nm + n^2 \log n)$ time for weighted graphs, where $n$ is the number of vertices $|V|$ and $m$ is the number of edges $|E|$. However, for very large and complex networks, the performance of Brandes' algorithm may be unacceptable \cite{bader2007approximating}. For example, in large social networks, Brandes' algorithm may take considerably large amount of time, but we want to find the hot spots or trending timely \cite{kourtellis2013identifying}. There are some heuristic variants on Brandes' algorithm, but the complexity has not seen significantly improvement \cite{puzis2012heuristics}. 

To address the performance problem, people have turned to paralleled algorithms \cite{madduri2009faster, bader2008graph, riondato2014fast, RiondatoK16}, randomized approximations \cite{bader2007approximating, geisberger2008better, riondato2014fast, RiondatoK16, C14, ASR14, ASR15}, etc. It turns out that sampling plays an important role in these algorithms. There are two categories on the state-of-art sampling approximation algorithms: uniform sampling (e.g., to sample nodes with identical probability)\cite{bader2007approximating, geisberger2008better, RU16} and non-uniform sampling (e.g., each node has a different probability to be sampled) \cite{riondato2014fast, RiondatoK16, C14, ASR14, ASR15}. However, none of the existing methods can show dominating performance over the rest. Hence how to choose the samplings to achieve better performance is still an open problem \cite{bader2007approximating} Moreover, many BC approximations by samplings lack theoretical performance bounds, and thus an improved bound may also speed up the algorithm \cite{geisberger2008better}. 

In this project, we will revisit Bader's algorithm, which takes $\varepsilon t$ samples to estimate the BC values within a factor of $1/\varepsilon$, where $t$ is the ratio between $n^2$ and the BC value of the vertex. We will discuss the performance bound of the algorithm, and refine the sampling parameters to achieve better results, e.g., fewer samples and lower factors. On the other hand, we will also use the idea of Riondato's sampling on shortest paths \cite{riondato2014fast}. We will propose and analyze a new adaptive BC approximation algorithm. We will also use simulations with real-world data to verify our results.

\section{Brandes' Exact Computation of Betweenness Centrality}
In this section we will talk about the main idea of Brandes' algorithm \cite{brandes2001faster}, which gives exact computation on betweenness centrality (BC) value of each vertex in the network. This is the fastest algorithm on exact BC computation currently.

Dependency of a source vertex on a given node is frequently used in our algorithms and discussions. Here we give the definition: \cite{brandes2001faster}
\begin{definition}
Given a graph $G=(V,E)$, the {\bf dependency} of a source vertex $s\in V$ on a vertex $v\in V$ is
\begin{equation}
\delta_{s\cdot}(v) = \sum_{t\in V, s\not= t\not= v} \delta_{st}(v).
\end{equation}
\end{definition}
Based on Definition \ref{def:bc}, the BC value of vertex $v$ is
\begin{equation}
BC(v) = \sum_{s\not= v} \delta_{s\cdot}.
\end{equation}
Brandes finds a recursive way to calculate the BC value of $v$, by introducing {\it predecessors} of $v$.
\begin{definition}
Given a graph $(V,E)$, the {\bf predecessors} of a vertex $v\in V$ on a shortest path from $s$ to $v$ is a subset $P_s(v)\subseteq V$ s.t.
$$t\in P_s(v) \Rightarrow \left( d(s,v) = d(s,t) + d(t,v) \wedge (t,v)\in E \right),$$
where $d(s,t)$ denotes the length of a shortest path from $s$ to $t$.
\end{definition}
The following theorem motivates Brandes' algorithm. Its proof is in \cite{brandes2001faster}.
\begin{theorem}
\label{thm:dp}
Given a graph $(V,E)$, for any $s,v\in V$, we have
\begin{equation}
\delta_{s\cdot}(v) = \sum_{t\in V s.t. v\in P_s(t)} \frac{\sigma_{sv}}{\sigma_{st}}(1+\delta_{s\cdot}(t)).
\end{equation}
\end{theorem}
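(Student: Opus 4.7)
The plan is to unfold the definition of $\delta_{s\cdot}(v)$ and regroup shortest paths according to the first vertex $t$ visited after $v$, which must satisfy $v\in P_s(t)$. First I would write $\delta_{s\cdot}(v)=\sum_{w\in V, w\neq s,v}\sigma_{sw}(v)/\sigma_{sw}$ and note that the only $w$ contributing are those for which $v$ lies on some shortest path from $s$ to $w$, i.e.\ $d(s,w)=d(s,v)+d(v,w)$. For such $w$, subpath optimality gives the factorization $\sigma_{sw}(v)=\sigma_{sv}\cdot\sigma_{vw}$.

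Next I would refine this factorization by classifying each shortest $s\to w$ path through $v$ by the unique successor $t$ of $v$ on that path. Such a $t$ is exactly a vertex with $(v,t)\in E$ and $d(s,t)=d(s,v)+w(v,t)$, i.e.\ $v\in P_s(t)$, and moreover $t$ lies on a shortest $s\to w$ path. This yields
\begin{equation*}
\sigma_{sw}(v)=\sum_{\substack{t:\,v\in P_s(t)\\ t\text{ on SP}(s,w)}}\sigma_{sv}\,\sigma_{tw}.
\end{equation*}
Substituting and swapping the order of summation gives
\begin{equation*}
\delta_{s\cdot}(v)=\sum_{t:\,v\in P_s(t)}\sigma_{sv}\sum_{w:\,t\text{ on SP}(s,w)}\frac{\sigma_{tw}}{\sigma_{sw}}.
\end{equation*}

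Then I would split the inner sum into the $w=t$ term, which contributes $1/\sigma_{st}$ since $\sigma_{tt}=1$, and the remaining $w\neq t$ terms. For the latter, the identity $\sigma_{sw}(t)=\sigma_{st}\,\sigma_{tw}$ (valid precisely when $t$ lies on a shortest $s\to w$ path) rewrites $\sigma_{tw}/\sigma_{sw}$ as $\delta_{sw}(t)/\sigma_{st}$. Recognizing that $\delta_{sw}(t)$ vanishes whenever $t$ is off all shortest $s\to w$ paths, the remaining sum equals $\delta_{s\cdot}(t)/\sigma_{st}$, and factoring out $1/\sigma_{st}$ produces the claimed $\frac{\sigma_{sv}}{\sigma_{st}}(1+\delta_{s\cdot}(t))$.

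The main obstacle is bookkeeping: one must be careful that the classification of shortest $s\to w$ paths by the successor of $v$ is a genuine partition (each path has a unique next vertex after $v$), and that terms corresponding to vertices $w$ off the shortest-path DAG from $v$ contribute zero so the sums really do telescope as above. Once this partition and the factorizations $\sigma_{sw}(v)=\sigma_{sv}\sigma_{vw}$ and $\sigma_{sw}(t)=\sigma_{st}\sigma_{tw}$ (both consequences of subpath optimality) are established, the remaining algebra is routine.
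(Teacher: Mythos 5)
The paper does not prove this theorem itself; it defers to Brandes' original paper \cite{brandes2001faster}, and your argument is essentially the standard proof given there: Brandes partitions the pair-dependency mass by the edge $\{v,t\}$ leaving $v$, which is the same bookkeeping as your partition by the successor $t$ of $v$, and uses the same two factorizations $\sigma_{sw}(v)=\sigma_{sv}\sigma_{vw}$ and $\sigma_{sw}(t)=\sigma_{st}\sigma_{tw}$. Your sketch is correct, including the key observations that the successor of $v$ on any shortest path is unique and satisfies $v\in P_s(t)$, and that the terms with $t$ off the shortest-path DAG vanish so the inner sum collapses to $(1+\delta_{s\cdot}(t))/\sigma_{st}$.
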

The basic idea of Brandes' algorithm can be summerized as follows:
\begin{enumerate}
\item For each vertex $s\in V$, we calculate the shortest paths from $s$ to other vertices, i.e., using Dijkstra's algorithm, which takes at least $O(|E|+|V|\log|V|)$ time.
\item For each vertex $s\in V$, traverse the vertices in descending order of their distances from $s$, and accumulate the dependencies by Theorem \ref{thm:dp}. Each traversal takes $O(|E|)$ time.
\end{enumerate}
The total complexity $T(n,m)$ of Brandes' algorithm, where $n=|V|$ and $m=|E|$, is therefore 
$$T(n) = O(n(m+n\log n)+nm) = O(nm+n^2\log n).$$
Even though Brandes' is the fastest algorithm on finding exact BCs, its time complexity can be unacceptable in many practical scenarios, e.g., very big graphs or large scaled datasets. This motivates the research in BC approximation based on samplings.

\section{Approximation on Betweenness Centrality Based on Adaptive Vertex Samplings}
In this section, we give Bader's BC approximation by adaptive-sampling \cite{bader2007approximating}. The basic idea of Bader's algorithm is that if we are only interested in a loose estimation of BC value, then we do not have to compute the shortest paths from each vertex $s\in V$. Instead we can take samples $S\subseteq V$ ($|S|<<|V|$) and compute the shortest paths on each sample $s\in S$. Then we can accumulate the dependencies like how Brandes' does, and it is guaranteed that the final sum of dependencies is a good estimation of the BC value, i.e., the approximating factor can be provably bounded.

Consider we want to estimate the BC value of a vertex $v$ given the graph $G=(V,E)$. We summerize Bader's algorithm as follows:
\begin{enumerate}
\item Let $S=0$.
\item Sample a vertex $v_i\in V$. Compute the shortest paths from $v_i$. Accumulate the dependencies $S \gets S+ \delta_{v_i\cdot}(v)$.
\item If $S \leq cn$ for some $c\geq 1$\footnote{Bader's paper \cite{bader2007approximating} requires $c\geq 2$. We will see the lower bound of $c$ can be relaxed to 1.}, repeat Step 2. Otherwise, output $nS/k$ as the estimation of $BC(v)$, where $k$ is the number of samples.
\end{enumerate}
Bader shows this algorithm can successfully estimate the BC value with a high probability and its factor has a upper bound. The following theorem is the main result.
\begin{theorem}
\label{thm:bader}
{\bf (Bader's)} For any $\varepsilon \in (0,1/2)$, Bader's algorithm can estimate $BC(v)$ within a factor of $1/\varepsilon$ with a successful probability no less than $1-2\varepsilon$ and $\varepsilon t$ samples, where $t = n^2/BC(v)$.
\end{theorem}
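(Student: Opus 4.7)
The plan is to treat Bader's procedure as a stopped sum $S=\sum_{i=1}^K X_i$ of i.i.d.\ summands $X_i=\delta_{v_i\cdot}(v)$, where $v_i$ is uniform on $V$ and $K=\min\{k:\sum_{j\le k}X_j>cn\}$ is the halting time. The two elementary facts that drive everything are $E[X_i]=\frac{1}{n}\sum_{s\in V}\delta_{s\cdot}(v)=BC(v)/n$, and $X_i\le n-1$ since $\delta_{v_i\cdot}(v)$ sums $n-1$ pair-dependencies that each lie in $[0,1]$. Combined with the pointwise sandwich $cn<S\le(c+1)n$ forced by $X_K\le n$, Wald's identity pins $E[K]$ inside $[ct,(c+1)t]$ where $t=n^2/BC(v)$.

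Next I would translate the multiplicative guarantee $\varepsilon\,BC(v)\le nS/K\le BC(v)/\varepsilon$ into a two-sided sandwich on $K$ alone. Substituting $cn<S\le(c+1)n$ into the estimator $nS/K$ shows the guarantee is implied by
\[
\varepsilon(c+1)\,t \;\le\; K \;\le\; c\,t/\varepsilon ,
\]
so it suffices to bound $\Pr[K<\varepsilon(c+1)t]$ and $\Pr[K>ct/\varepsilon]$ and union-bound. Each tail event on $K$ rewrites as an event about a fixed-length partial sum (for instance, $\{K<k_0\}=\{\sum_{i=1}^{k_0-1}X_i>cn\}$), so the random stopping time disappears from the probability and one is left with an ordinary concentration problem.

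The main obstacle is this concentration step, which is exactly where the original proof uses the invalid inequality flagged in the introduction. My plan is to replace that step by a genuine Chernoff/Hoeffding bound applied to the scaled variables $X_i/n\in[0,1]$ of common mean $1/t$. For the lower-tail point $k_0\approx\varepsilon(c+1)t$ the threshold $c$ lies well above the mean $k_0/t\approx\varepsilon(c+1)$, while for the upper-tail point $k_1\approx ct/\varepsilon$ the threshold $c$ lies well below the mean $k_1/t\approx c/\varepsilon$; in either case the multiplicative gap is of order $1/\varepsilon$, so the standard Chernoff exponent drives each failure probability to at most $\varepsilon$. A single union bound then yields the claimed success probability $1-2\varepsilon$, and the same upper-tail concentration furnishes the high-probability bound on $K$ at termination that is reported as the sample count (the precise constant being one of the parameters that the rest of the paper proposes to sharpen, which also explains why the lower bound on $c$ can be relaxed from Bader's $c\ge 2$ to $c\ge 1$).
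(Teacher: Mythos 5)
First, some context: the paper never proves Theorem~\ref{thm:bader}. It is quoted from Bader et al.\ as background, and the introduction explicitly argues that Bader's own proof is broken (the invalid step $\Pr[\sum_i Z_i \ge S]\le\sum_i\Pr[Z_i\ge S]$). What the paper actually proves is the different statement of Theorem~\ref{thm:main} --- factor $1/(\varepsilon t^{1/3})$ with $\varepsilon t^{2/3}$ samples --- via Chebyshev's inequality plus a \emph{corrected} pigeonhole-and-union-bound argument (Lemmas~\ref{lem:1} and~\ref{lem:2}). So you are attempting a statement the authors themselves sidestep. Your architecture is in one respect cleaner than theirs: rewriting the tail events on the stopping time $K$ as events about fixed-length partial sums removes the stopping-time subtlety that the paper glosses over when it applies Lemma~\ref{lem:2} at the random terminal $k$, and your sandwich ``$\varepsilon(c+1)t\le K\le ct/\varepsilon$ implies the factor guarantee'' does check out given $cn<S\le(c+1)n$.

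The genuine gap is the concentration step, which you assert rather than prove: ``the standard Chernoff exponent drives each failure probability to at most $\varepsilon$.'' It does not, uniformly over the claimed range $\varepsilon\in(0,1/2)$. For the upper tail of $K$, with $k_1=ct/\varepsilon$ the scaled sum $\sum_i X_i/n$ has mean $c/\varepsilon$ and you need $\Pr[\text{sum}\le c]\le\varepsilon$; the multiplicative Chernoff bound gives $\exp\bigl(-(1-\varepsilon)^2c/(2\varepsilon)\bigr)$, which at $\varepsilon=0.4$, $c=2$ equals $e^{-0.9}\approx 0.41>\varepsilon$. For the lower tail, with $k_0\approx\varepsilon(c+1)t$ the mean is $\varepsilon(c+1)$ against threshold $c$; the Poisson-form bound $e^{-\mu}(e\mu/c)^c$ at $\varepsilon=0.1$, $c=2$ is about $0.12>\varepsilon$. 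So each tail can individually exceed $\varepsilon$, and the union bound then misses $1-2\varepsilon$. Worse, for $c=1$ (the relaxation the paper advertises) the lower-tail threshold $c$ and the mean $\varepsilon(c+1)=2\varepsilon$ merge as $\varepsilon\to 1/2$, so no concentration inequality of any kind can close that tail there; Chebyshev with the paper's variance $Var[X_i]\approx A^2/n^2$ also fails to give $\le\varepsilon$ uniformly in $t\ge 1$. This difficulty is not incidental --- it is precisely why the paper retreats to $\varepsilon t^{2/3}$ samples, where the Chebyshev computation closes at $\varepsilon^3/(c-\varepsilon)^2$. To salvage your route you would need to restrict $\varepsilon$, $c$, or $t$, or produce an explicit inequality whose bound is $\le\varepsilon$ over the entire stated range; as written the central claim is unsupported.
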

Note that the number of samples in Bader's method is proportional to $t= n^2/BC(v)$, which implies that if the target vertex has extremely low BC, the sampling is likely to run forever.
Thus Bader's method only fits for the vertices with high BCs. There is some recent advance on approximating the vertices with top BCs, e.g., \cite{RU16}.
Hence we assume the given target vertex is unlikely to cause infinite sampling.

Bader's proof uses some loosely bounded arguments, e.g., the upper bound of the variance of the dependency $\delta_{v_i\cdot}(v)$ for each sample $v_i$. In this paper we will discuss the results when the arguments are more tightly bounded. In fact, we will show that the results in Theorem \ref{thm:bader} can be improved.

\section{Refining Results of BC Approximation Samplings}
In this section we will discuss the probabilistic sampling model, and show an improved theorem which achieves better parameters compared to Theorem \ref{thm:bader}.

We first give the model assumptions on the samplings. Let $X_i$ be the random variable of dependency $\delta_{v_i\cdot}(v)$, i.e., the sampled source vertex is $v_i$, and the target vertex is $v$. Let $A=BC(v)$. The following lemma gives the probability distribution of $X_i$.
\begin{lemma}
\label{lem:df}
The cdf of $X_i$ is
$$\p [X_i\leq x] = 
\begin{cases}
0 \quad x\leq 0, \\
1-\left( 1-\frac{x}{A} \right)^{n-1}\quad 0<x<A,\\
1 \quad x\geq A.
\end{cases}$$
Therefore, the pdf of $X_i$ is
$$p(x) =
\begin{cases}
\frac{n-1}{A} \left( 1-\frac{x}{A} \right)^{n-2} \quad 0<x<A,\\
0 \quad \mathrm{otherwise}.
\end{cases}$$
\end{lemma}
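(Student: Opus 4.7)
The plan is to justify the stated distribution as the marginal of a ``random partition'' model for the $n$ dependencies $\delta_{s\cdot}(v)$, $s\in V$, which are non-negative and sum to $A=BC(v)$ by definition. Since $v_i$ is sampled uniformly from $V$, $X_i=\delta_{v_i\cdot}(v)$ takes one of the $n$ values $\{\delta_{s\cdot}(v)\}_{s\in V}$ with equal probability, and the only hard constraints the graph imposes on these values are nonnegativity and summation to $A$. I would therefore adopt the modeling assumption that the tuple $(\delta_{s\cdot}(v))_{s\in V}$ is uniformly distributed over the simplex $\{(d_1,\ldots,d_n)\in\mathbb{R}_{\geq 0}^n : \sum_j d_j = A\}$; this is the maximum-entropy prior consistent with those constraints, and under it all coordinates are exchangeable.

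Next I would invoke the classical broken-stick construction: drop $n-1$ i.i.d.\ uniform random points $U_1,\ldots,U_{n-1}$ into $[0,A]$ and take the $n$ consecutive gaps $Y_1,\ldots,Y_n$ between them (and the endpoints $0$ and $A$). This construction realizes exactly the $\mathrm{Dirichlet}(1,\ldots,1)$ law on the simplex above, so by exchangeability $X_i$ has the same distribution as any single gap, say the leftmost one $Y_1=\min_j U_j$. Then for $0\leq x\leq A$,
$$\p[X_i>x]=\p[\min_j U_j>x]=\prod_{j=1}^{n-1}\p[U_j>x]=\left(1-\frac{x}{A}\right)^{n-1},$$
which gives the stated CDF on $(0,A)$ after taking the complement, and the PDF follows by differentiating in $x$. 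The boundary cases ($0$ for $x\leq 0$ and $1$ for $x\geq A$) are then immediate from the support $X_i\in[0,A]$.

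The delicate part, which I expect to be the main obstacle, is defending the uniform-prior / broken-stick assumption itself: the dependencies are in fact deterministic quantities determined by the fixed graph $G$, so the continuous law in the lemma must be read as a modeling idealization rather than a combinatorial identity. I would flag this explicitly, note that absent further structural information the uniform-on-simplex prior is the minimally committal choice consistent with the only constraint available (namely $\sum_s\delta_{s\cdot}(v)=A$ with each summand nonnegative), and argue that this is precisely what makes it the right null model for deriving a distribution-free bound in the downstream refinement of Theorem~\ref{thm:bader}.
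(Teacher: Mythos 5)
Your proposal is correct and takes essentially the same route as the paper: the paper's proof also posits that $(0,A)$ is split into $n$ sub-intervals by uniformly distributed cutting points (the broken-stick / $\mathrm{Dirichlet}(1,\ldots,1)$ model), derives $\p[X_i>x]=\left(1-\frac{x}{A}\right)^{n-1}$, and differentiates. You are merely more explicit about the exchangeability step and about flagging the uniform-on-simplex assumption, which the paper itself acknowledges as the key modeling assumption immediately after the lemma.
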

\begin{proof}
See the Appendix.
\end{proof}
In the proof of Lemma \ref{lem:df}, we assume the cutting points of the sub-intervals representing each $X_i$ are uniformly distributed. This is the major assumption of our model. We will show that our model is highly compatible to the existing results, e.g., \cite{bader2007approximating}

Now we can compute the mean $E[X_i]$ and the variance $Var[X_i]$. Note a similar lemma has been proposed by Bader's paper (Lemma 2 in \cite{bader2007approximating}).
\begin{corollary}
\label{cor:ev}
$$E[X_i] = \frac{A}{n}, \quad Var[X_i] = \frac{n-1}{n^2(n+1)}A^2.$$
\end{corollary}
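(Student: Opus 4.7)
The plan is to derive both moments by direct integration against the pdf given in Lemma~\ref{lem:df}, and then combine them via the standard identity $\operatorname{Var}[X_i]=E[X_i^2]-E[X_i]^2$.

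First I would compute $E[X_i]=\int_0^A x\,p(x)\,dx$. After the substitution $u=x/A$, the integral collapses to $A(n-1)\int_0^1 u(1-u)^{n-2}\,du$, which is $A(n-1)\,B(2,n-1)$. Using the gamma representation $B(2,n-1)=\Gamma(2)\Gamma(n-1)/\Gamma(n+1)=1/[n(n-1)]$, this yields $E[X_i]=A/n$ as claimed. As a sanity check, this also follows from a pure symmetry/linearity argument: since $\sum_{s\neq v}\delta_{s\cdot}(v)=A$ and the $n-1$ nonzero dependencies are identically distributed under the uniform-cut assumption of Lemma~\ref{lem:df}, the mean of each must be $A/n$ (accounting for the mass at $0$ contributed by sampling $v$ itself).

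Next I would compute the second moment the same way: $E[X_i^2]=\int_0^A x^2\,p(x)\,dx = A^2(n-1)\,B(3,n-1)$. Again using $B(3,n-1)=2\,(n-2)!/(n+1)!$, this simplifies to $E[X_i^2]=2A^2/[n(n+1)]$. Finally, subtracting the square of the mean,
\[
\operatorname{Var}[X_i]=\frac{2A^2}{n(n+1)}-\frac{A^2}{n^2}=A^2\cdot\frac{2n-(n+1)}{n^2(n+1)}=\frac{n-1}{n^2(n+1)}A^2,
\]
which is exactly the claimed expression.

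There is no real obstacle here: the statement is a direct consequence of the explicit pdf established in Lemma~\ref{lem:df}, and everything reduces to two Beta-function evaluations followed by a short algebraic simplification. The only mild care is in the last step, where one must combine the two fractions over the common denominator $n^2(n+1)$ and verify that the numerator collapses to $n-1$; this is the arithmetic identity that makes the answer noticeably sharper than the looser $O(A^2/n^2)$ bound used implicitly in Bader's Lemma~2.
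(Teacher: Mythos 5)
Your proposal is correct and follows essentially the same route as the paper: both compute $E[X_i]$ and $E[X_i^2]$ by direct integration against the pdf from Lemma~\ref{lem:df} and then apply $\operatorname{Var}[X_i]=E[X_i^2]-E[X_i]^2$. The only cosmetic difference is that you evaluate the integrals via Beta-function identities while the paper uses integration by parts; the resulting moments and the final simplification are identical.
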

\begin{proof}
See the Appendix.
\end{proof}
Note that in \cite{bader2007approximating}, Lemma 2 states
$$E[X_i] = A/n, E[X_i^2] \leq A, Var[X_i] \leq A.$$
Since $A \leq n^2$, it is clear our bounds satisfy the cases of $E[X_i]$ and $Var[X_i]$. We leave the discussion on the case of $E[X_i^2]$ to the Appendix.

Like Bader's paper \cite{bader2007approximating}, we need two lemmas to show our main result.
\begin{lemma}
\label{lem:1}
Let $k=\varepsilon (\frac{n^2}{A})^{\frac{2}{3}}$. Then
$$\p[\sum_{i=1}^k X_i \geq cn] \leq \frac{\varepsilon^3}{(c-\varepsilon)^2}.$$
\end{lemma}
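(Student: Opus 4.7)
The plan is to apply Chebyshev's inequality to the partial sum $S_k = \sum_{i=1}^{k} X_i$, combining the moments furnished by Corollary \ref{cor:ev} with the implicit constraint that the sampling actually takes place, i.e., $k \geq 1$. Since the $X_i$ are i.i.d., the mean and variance of $S_k$ are $\mu := E[S_k] = kA/n$ and $\sigma^2 := Var[S_k] = k \cdot \frac{n-1}{n^2(n+1)} A^2 \leq kA^2/n^2$. The tight variance bound from Corollary \ref{cor:ev} is essential here; using Bader's looser bound $Var[X_i] \leq A$ would not yield the sharpened constant.

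Substituting $k = \varepsilon (n^2/A)^{2/3}$ and invoking the trivial inequality $A \leq n^2$ (noted right after Definition \ref{def:bc}), I would first show that $\mu = \varepsilon (nA)^{1/3} \leq \varepsilon n$, so that $cn - \mu \geq (c-\varepsilon) n > 0$ (using $c \geq 1 > \varepsilon$ since $\varepsilon \in (0, 1/2)$). Chebyshev's inequality then yields
\[
\p[S_k \geq cn] \;\leq\; \frac{\sigma^2}{(cn - \mu)^2} \;\leq\; \frac{\varepsilon \, A^{4/3}}{(c-\varepsilon)^2 \, n^{8/3}} \;=\; \frac{\varepsilon \, (A/n^2)^{4/3}}{(c-\varepsilon)^2}.
\]

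The final step, which produces the extra factor $\varepsilon^2$, is to notice that the very definition $k = \varepsilon (n^2/A)^{2/3}$ together with $k \geq 1$ forces $(A/n^2)^{2/3} \leq \varepsilon$, hence $(A/n^2)^{4/3} \leq \varepsilon^2$. Plugging this into the previous display yields the desired bound $\p[S_k \geq cn] \leq \varepsilon^3/(c-\varepsilon)^2$.

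The main obstacle is to see where the third power of $\varepsilon$ comes from: Chebyshev alone contributes only one factor of $\varepsilon$ (through the variance being scaled by $k \propto \varepsilon$). The remaining $\varepsilon^2$ is not probabilistic at all---it is hidden in the requirement that the stated sample size $k$ be at least one, which caps $A/n^2$. Recognizing this interplay, and making sure the tight variance bound from Corollary \ref{cor:ev} is used rather than the loose bound $Var[X_i] \leq A$, is what delivers the refinement over Bader's analysis.
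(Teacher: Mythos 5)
Your proof is correct, but it takes a genuinely different route from the paper's. The paper's proof first notes that if $\sum_{i=1}^k (X_i - A/n) \geq (c-\varepsilon)n$ then at least one summand is at least $(c-\varepsilon)n/k$, and then applies a union bound followed by Chebyshev's inequality to each $X_i$ \emph{individually}; the per-term threshold contributes a factor $k^2$ and the union bound a factor $k$, so the full $\varepsilon^3$ falls out algebraically from $k^3A^2 = \varepsilon^3 n^4$ with no side condition on $k$. You instead apply Chebyshev once to the whole sum $S_k$, which yields only a single factor of $\varepsilon$ through $Var[S_k]=k\,Var[X_i]$, and you recover the remaining $\varepsilon^2$ from the constraint that $k\geq 1$ forces $(A/n^2)^{2/3}\leq\varepsilon$. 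Two caveats: your route needs the samples to be independent (or at least uncorrelated) so that $Var[S_k]=k\,Var[X_i]$, which the paper's union-bound argument does not require for this lemma (though the paper does assume it in Lemma \ref{lem:2}, so this costs nothing); and the hypothesis $k\geq 1$ is not in the lemma's statement, so you are strictly proving a conditional version --- harmless, since the sum must have at least one term for the claim to be meaningful. In exchange, your intermediate bound $\varepsilon\,(A/n^2)^{4/3}/(c-\varepsilon)^2$ is strictly sharper than $\varepsilon^3/(c-\varepsilon)^2$ whenever $k>1$, because a single Chebyshev application to the sum exploits the linear (rather than quadratic) growth of the variance in $k$ and avoids the union-bound penalty. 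Your remark that Bader's looser bound $Var[X_i]\leq A$ would only deliver $\varepsilon^{3/2}$ along this route is also accurate; the tight variance from Corollary \ref{cor:ev} is needed in both proofs.
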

\begin{proof}
See the Appendix.
\end{proof}

\begin{lemma}
\label{lem:2}
Let $k\geq\varepsilon (\frac{n^2}{A})^{\frac{2}{3}}$ and $d>0$. Then
$$\p[\bigg|\frac{n}{k}\sum_{i=1}^k X_i - A\bigg|\geq dA] \leq \frac{1}{\varepsilon d^2} \left(\frac{A}{n^2}\right)^{\frac{2}{3}}.$$
\end{lemma}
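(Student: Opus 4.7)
The plan is to apply Chebyshev's inequality directly, using the variance bound from Corollary~\ref{cor:ev}. Since the samples $v_1,\dots,v_k$ are drawn independently, the $X_i$'s are i.i.d.\ with common mean $A/n$ and common variance $\frac{n-1}{n^2(n+1)}A^2$, so the estimator $Y := \frac{n}{k}\sum_{i=1}^k X_i$ has expectation exactly $A$ and variance
\[
\operatorname{Var}[Y] \;=\; \frac{n^2}{k^2}\cdot k\cdot \operatorname{Var}[X_i] \;=\; \frac{n^2}{k}\cdot \frac{n-1}{n^2(n+1)}A^2 \;=\; \frac{n-1}{k(n+1)}A^2.
\]
Since $\frac{n-1}{n+1}<1$, I would crudely bound $\operatorname{Var}[Y] \le A^2/k$.

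Next I would apply Chebyshev's inequality to $Y$ around its mean $A$:
\[
\p\!\left[|Y-A|\ge dA\right] \;\le\; \frac{\operatorname{Var}[Y]}{d^2 A^2} \;\le\; \frac{1}{k d^2}.
\]
Finally, plugging in the hypothesis $k \ge \varepsilon (n^2/A)^{2/3}$ gives
\[
\frac{1}{k d^2} \;\le\; \frac{1}{\varepsilon d^2}\left(\frac{A}{n^2}\right)^{2/3},
\]
which is the claimed bound.

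There is no real obstacle here: the whole statement is essentially a one-line Chebyshev bound once Corollary~\ref{cor:ev} is available, and the choice of exponent $2/3$ on $n^2/A$ in the hypothesis on $k$ is exactly tuned so that $1/k$ absorbs into the stated right-hand side. The only place one must be a little careful is in dropping the factor $(n-1)/(n+1)$ when bounding the variance --- keeping it would actually strengthen the lemma slightly, but the form stated here is already tight enough for its intended use in the subsequent main theorem, so I would simply use the clean bound $\operatorname{Var}[X_i]\le A^2/n^2$ and proceed as above.
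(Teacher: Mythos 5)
Your proof is correct and is essentially identical to the paper's: both apply Chebyshev's inequality to the (rescaled) sum $\sum_i X_i$ using the variance from Corollary~\ref{cor:ev} and independence of the samples, drop the factor $\frac{n-1}{n+1}<1$, and then substitute the hypothesis $k\geq\varepsilon(n^2/A)^{2/3}$. The only cosmetic difference is that you phrase the bound in terms of the estimator $\frac{n}{k}\sum_i X_i$ rather than the centered sum, and you correctly write the final step as an inequality where the paper writes an equality.
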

\begin{proof}
See the Appendix.
\end{proof}

Now we have our main theorem.
\begin{theorem}
\label{thm:main}
For $0<\varepsilon < 1/2$, if $BC(v) = n^2/t$, $t\geq 1$, then with probability more than $1-\left(1+\frac{1}{(2c-1)^2}\right)\varepsilon$, BC can be estimated within a factor of $1/(\varepsilon t^{1/3})$ with $\varepsilon t^{2/3}$ samples.
\end{theorem}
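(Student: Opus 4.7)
The plan is to set $k = \varepsilon t^{2/3}$ and combine Lemmas~\ref{lem:1} and~\ref{lem:2} via a union bound. With $A = BC(v) = n^2/t$, we have $(n^2/A)^{2/3} = t^{2/3}$ and $(A/n^2)^{2/3} = t^{-2/3}$, so this choice of $k$ exactly matches the hypothesis of Lemma~\ref{lem:1} and satisfies the $k \geq \varepsilon(n^2/A)^{2/3}$ requirement of Lemma~\ref{lem:2}.

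First, Lemma~\ref{lem:1} directly bounds the premature-stopping event $\{\sum_{i=1}^k X_i \geq cn\}$ by $\varepsilon^3/(c-\varepsilon)^2$. On its complement the partial sum after $k$ samples has not yet triggered the algorithm's $S > cn$ stopping rule, so a run of length $k$ is consistent with the stated sample budget. Second, I would instantiate Lemma~\ref{lem:2} with $d = 1/(\varepsilon t^{1/3})$; a short substitution gives
\[
\p\!\left[\left|\frac{n}{k}\sum_{i=1}^k X_i - A\right| \geq \frac{A}{\varepsilon t^{1/3}}\right] \leq \frac{1}{\varepsilon d^2 t^{2/3}} = \varepsilon,
\]
so off this second bad event the output $nS/k$ falls within the claimed multiplicative factor $1/(\varepsilon t^{1/3})$ of $A$.

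The one step that actually needs work is translating the Lemma~\ref{lem:1} bound $\varepsilon^3/(c-\varepsilon)^2$ into the cleaner form $\varepsilon/(2c-1)^2$ appearing in the theorem. I would lean on the hypothesis $\varepsilon < 1/2$ in two places: first, $c - \varepsilon > c - 1/2 = (2c-1)/2$ yields $\varepsilon^3/(c-\varepsilon)^2 < 4\varepsilon^3/(2c-1)^2$; and second, $4\varepsilon^2 < 1$ further reduces this to $\varepsilon/(2c-1)^2$. A union bound over the two bad events then sums to $\bigl(1 + 1/(2c-1)^2\bigr)\varepsilon$, matching the statement. I expect this algebraic tightening to be the subtlest step; everything else is bookkeeping with the parameter choices $k = \varepsilon t^{2/3}$ and $d = 1/(\varepsilon t^{1/3})$ together with the two lemmas taken as given.
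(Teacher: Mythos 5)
Your proposal is correct and follows essentially the same route as the paper: invoke Lemma~\ref{lem:1} at $k=\varepsilon t^{2/3}$ for the premature-termination event, invoke Lemma~\ref{lem:2} with $d=1/(\varepsilon t^{1/3})$ for the accuracy event, tighten $\varepsilon^3/(c-\varepsilon)^2$ to $\varepsilon/(2c-1)^2$ using $\varepsilon<1/2$, and combine by a union bound. The algebraic step you flagged as subtlest is exactly the one the paper performs, in the same way.
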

\begin{proof}
We first estimate the probability that our algorithm terminates with $k=\varepsilon (\frac{n^2}{A})^{2/3}$ samples. Lemma \ref{lem:1} guarantees that this probability is less than $\varepsilon^3/(c-\varepsilon)^2$. Since $\varepsilon < 1/2$, we have
$$\frac{\varepsilon^3}{(c-\varepsilon)^2} \leq \varepsilon \frac{1/4}{(c-1/2)^2}=\frac{\varepsilon}{(2c-1)^2}.$$
We next estimate the probability that our algorithm fails to estimate within the factor. By setting $d=1/(\varepsilon t^{1/3})$, Lemma \ref{lem:2} guarantees that the probability of such failure is at most $\varepsilon$, if the number of samples $k\geq \varepsilon (\frac{n^2}{A})^{2/3}=\varepsilon t^{2/3}$.

Note that except the above two cases, there is no other case when our algorithm will fail to estimate. By subtracting the probabilities, our algorithm will succeed with probability more than $1-\varepsilon-\frac{1}{(2c-1)^2}\varepsilon$.
\end{proof}

By setting $c=1$ in Theorem \ref{thm:main}, we have the following corollary, which has better parameters compared to Theorem 3 in \cite{bader2007approximating}.
\begin{corollary}
For $0<\varepsilon < 1/2$, if $BC(v) =  n^2/t$, $t\geq 1$, then with probability more than $1-2\varepsilon$, BC can be estimated within a factor of $1/(\varepsilon t^{1/3})$ with $\varepsilon t^{2/3}$ samples.
\end{corollary}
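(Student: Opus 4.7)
The plan is immediate: the corollary is a specialization of Theorem~\ref{thm:main} at the boundary value $c = 1$. First I verify admissibility: the algorithm's stopping rule involves a threshold $cn$, and the footnote in Section~4 explicitly notes that Bader's original constraint $c \geq 2$ can be loosened to $c \geq 1$, so $c = 1$ is the smallest admissible value for the algorithm. The hypothesis $0 < \varepsilon < 1/2$ also ensures $c - \varepsilon > 1/2 > 0$, which keeps the denominator $(c-\varepsilon)^2$ appearing in Lemma~\ref{lem:1} strictly positive, so the bound invoked inside the proof of Theorem~\ref{thm:main} remains valid at this choice of $c$.

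Next I substitute $c = 1$ into the probability bound produced by Theorem~\ref{thm:main}. The $c$-dependent coefficient $1 + \frac{1}{(2c-1)^2}$ becomes $1 + \frac{1}{1} = 2$, so the success-probability guarantee $1 - \bigl(1 + \frac{1}{(2c-1)^2}\bigr)\varepsilon$ collapses to $1 - 2\varepsilon$. The factor $1/(\varepsilon t^{1/3})$ and the sample count $\varepsilon t^{2/3}$ are inherited verbatim from Theorem~\ref{thm:main}, since neither of them depends on $c$ (they come from the choice $d = 1/(\varepsilon t^{1/3})$ fed into Lemma~\ref{lem:2}, which is independent of the stopping threshold).

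There is no real obstacle here; the corollary is a one-line deduction once Theorem~\ref{thm:main} is in hand. I would note, however, that the choice $c = 1$ is not the one that minimizes the coefficient $1 + 1/(2c-1)^2$ — larger $c$ would give a smaller coefficient and hence a tighter probability bound, at the cost of more samples before the stopping rule triggers. The significance of presenting the $c=1$ case is purely expository: it produces a statement whose format matches Bader's Theorem~3 exactly (same $1 - 2\varepsilon$ probability), making visible the improvements from $\varepsilon t$ samples down to $\varepsilon t^{2/3}$, and from factor $1/\varepsilon$ down to $1/(\varepsilon t^{1/3})$, which are strict improvements whenever $t > 1$.
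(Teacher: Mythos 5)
Your proposal is correct and matches the paper exactly: the paper obtains this corollary by setting $c=1$ in Theorem~\ref{thm:main}, so that the coefficient $1+\frac{1}{(2c-1)^2}$ becomes $2$, with the factor and sample count carried over unchanged. Your additional remarks on admissibility of $c=1$ and the comparison with Bader's Theorem~3 are accurate but not needed for the deduction.
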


\section{Adaptive Samplings on Pairs of Vertices}
In this section we will propose our BC approximation by adaptive samplings on pairs of vertices, and analyze its performance. 

\subsection{The Approximation Algorithm}
Riondato et al. \cite{riondato2014fast, RiondatoK16} give the idea to approximate BC by sampling on shortest paths between a pair of vertices. The number of samplings are calculated beforehand by using Vapnik-Chervonenkis theory. Thus it is {\it not} adaptive. Riondato's approximation can achieve very small error $\varepsilon$ with very high successful probability, i.e., $1-\delta$. However, the number of samplings can be very high, i.e., $\propto (\ln 1/\delta)/\varepsilon^2$. This is because Riondato's algorithm can give simultaneous probabilistic guarantees on the approximation for each vertex, not only one target vertex. Thus it is natural to need more samplings. But it is still interesting to investigate the possibility of adaptive samplings not only on single vertices. Riondato's \textsf{ABRA} algorithm \cite{RU16} gives the idea to sample on pairs of vertices. Our idea is to merge Bader's \cite{bader2007approximating} and Riondato's \cite{riondato2014fast, RiondatoK16, RU16} methods. 
We propose Algorithm \ref{alg} to estimate the BC value of the vertex $t$.

\begin{algorithm}
\centering
\begin{minipage}{0.95\textwidth}
\begin{algorithmic}
\Require Graph $G=(V, E)$. Target vertex $t\in V$.
\Ensure Approximated $BC(t)$.
\State $S \gets 0$, $k \gets 0$.
\While {$S \leq cn$ for some constant $c\geq 1$}
	\State Choose a pair of different vertices $(u, v)$ in the graph.
	\State Find all the shortest paths from $u$ to $v$. Denote by $\sigma_{uv}$ the number of the paths.
	\State Denote by $\sigma_{uv}(t)$ the number of the above paths that pass through $t$.
	\State $S \gets S + \sigma_{uv}(t)/\sigma_{uv}$.
	\State $k \gets k+1$.
\EndWhile
\State Return $n(n-1)S/k$ as the approximated BC of $t$.
\end{algorithmic}
\end{minipage}
\caption{Betweenness Centrality Approximation Algorithm Based on Adaptive Sampling on Pairs of Vertices}
\label{alg}
\end{algorithm}

\subsection{Analysis}
Let random variable $Y_i$ be the value $\sigma_{uv}(t)/\sigma_{uv}$ of one sample $(u, v)$. By a similar manner as $X_i$ in sampling on vertices, we may find the cdf and pdf of $Y_i$.

\begin{lemma}
\label{lem:df2}
The cdf of $Y_i$ is
$$\p [Y_i\leq x] = 
\begin{cases}
0 \quad x\leq 0, \\
1-\left( 1-\frac{x}{A} \right)^{n(n-1)-1}\quad 0<x<A,\\
1 \quad x\geq A.
\end{cases}$$
Therefore, the pdf of $Y_i$ is
$$p(x) =
\begin{cases}
\frac{n(n-1)-1}{A} \left( 1-\frac{x}{A} \right)^{n(n-1)-2} \quad 0<x<A,\\
0 \quad \mathrm{otherwise}.
\end{cases}$$
\end{lemma}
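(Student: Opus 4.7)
The plan is to mirror the argument used for Lemma \ref{lem:df}, with the $n-1$ candidate source vertices of that setting replaced by the $n(n-1)$ ordered pairs of distinct vertices. First I would observe that $\sum_{u\neq v}\sigma_{uv}(t)/\sigma_{uv}=BC(t)=A$, since terms with $t\in\{u,v\}$ contribute zero pair-dependency and can be included harmlessly. Laying these $n(n-1)$ non-negative values end to end therefore produces a partition of $[0,A]$ by $n(n-1)-1$ cutting points, and one uniformly random pair $(u,v)$ selects one of the $n(n-1)$ sub-intervals uniformly at random, so $Y_i$ is distributed as the length of a uniformly chosen spacing.

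Next I would invoke the same modelling hypothesis that underlies Lemma \ref{lem:df}: the $n(n-1)-1$ cutting points are i.i.d.\ uniform on $[0,A]$. Under this assumption, $\p[Y_i>x]$ equals the probability that all cutting points avoid a chosen sub-interval of length $x$, which by the classical uniform-spacings computation is $(1-x/A)^{n(n-1)-1}$ for $0<x<A$. This yields the stated cdf; differentiating $1-(1-x/A)^{n(n-1)-1}$ then gives the pdf $\frac{n(n-1)-1}{A}(1-x/A)^{n(n-1)-2}$ on $(0,A)$. The endpoint cases $x\le 0$ and $x\ge A$ follow from $0\le Y_i\le A$.

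The main obstacle is not algebraic but modelling-theoretic: justifying the uniform-cutting-points assumption in the pair-sampling setting. Since the paper explicitly flags this as the major modelling assumption of Lemma \ref{lem:df}, the task here reduces to arguing that the same symmetry is equally defensible once ``sample a source vertex'' is replaced by ``sample an ordered pair of distinct vertices'': the pool of contributors simply grows from $n-1$ to $n(n-1)$, and the distributional argument is otherwise identical. Once that parallel is in place, the derivation of the cdf and pdf is a direct specialisation of the proof of Lemma \ref{lem:df} with $n-1$ replaced everywhere by $n(n-1)$.
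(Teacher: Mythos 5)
Your proposal matches the paper's intent exactly: the paper gives no separate proof for this lemma, stating only that it follows ``by a similar manner'' from Lemma \ref{lem:df}, whose appendix proof is precisely the uniform-spacings argument you reproduce with $n-1$ replaced by $n(n-1)$ (including the same modelling assumption that the cutting points are uniformly distributed). Your additional remark that the pairs with $t\in\{u,v\}$ contribute zero and so the $n(n-1)$ pair-dependencies still sum to $A$ is a small but worthwhile clarification the paper leaves implicit.
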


The mean and variance of $Y_i$ can also be found analogously.
\begin{corollary}
\label{cor:ev2}
$$E[Y_i] = \frac{A}{n(n-1)}, \quad Var[Y_i] = \frac{n(n-1)-1}{n^2(n-1)^2(n(n-1)+1)}A^2.$$
\end{corollary}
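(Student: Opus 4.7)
The plan is to compute $E[Y_i]$ and $E[Y_i^2]$ directly from the density in Lemma \ref{lem:df2}, then set $Var[Y_i] = E[Y_i^2] - (E[Y_i])^2$. Structurally, the pdf of $Y_i$ is identical to the pdf of $X_i$ in Lemma \ref{lem:df} with the substitution $n \mapsto N$ where $N := n(n-1)$. So I expect the proof to mirror the computation behind Corollary \ref{cor:ev} line for line, with $n-1$ replaced by $N-1 = n(n-1)-1$ throughout.

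Concretely, I would start by writing $E[Y_i] = \int_0^A x \cdot \tfrac{N-1}{A}\bigl(1 - x/A\bigr)^{N-2}\, dx$ and applying the substitution $u = 1 - x/A$ (so $x = A(1-u)$ and $dx = -A\, du$). This reduces the integral to a Beta integral $\int_0^1 (1-u) u^{N-2}\, du = \tfrac{1}{N-1} - \tfrac{1}{N}$, which multiplied by $A(N-1)$ collapses to $A/N$. Plugging in $N = n(n-1)$ gives the stated mean $A/(n(n-1))$.

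For the second moment I would use the same substitution to get $E[Y_i^2] = A^2(N-1)\int_0^1 (1-u)^2 u^{N-2}\, du = A^2(N-1)\bigl(\tfrac{1}{N-1} - \tfrac{2}{N} + \tfrac{1}{N+1}\bigr)$. The bracketed quantity should simplify (over the common denominator $N(N-1)(N+1)$) to $\tfrac{2}{N(N-1)(N+1)}$, yielding $E[Y_i^2] = \tfrac{2A^2}{N(N+1)}$. Subtracting $(E[Y_i])^2 = A^2/N^2$ and combining yields $Var[Y_i] = A^2 \cdot \tfrac{2N - (N+1)}{N^2(N+1)} = \tfrac{N-1}{N^2(N+1)} A^2$, which is precisely the claimed variance after resubstituting $N = n(n-1)$.

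There is no real conceptual obstacle here; the argument is a repetition of the proof of Corollary \ref{cor:ev} with the exponent shifted. The only place one can slip is in the final algebraic simplification, since $2/(N(N+1))$ and $1/N^2$ have differently shaped denominators, so I would write that subtraction carefully to ensure the numerator telescopes to $N - 1$ rather than, say, $N$ or $N - 2$.
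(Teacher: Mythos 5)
Your proposal is correct and matches the paper's (implicit) argument: the paper gives no separate proof of Corollary \ref{cor:ev2}, stating only that it follows analogously to Corollary \ref{cor:ev}, which is exactly the reduction you carry out by setting $N = n(n-1)$. The only cosmetic difference is that the paper evaluates the moment integrals by integration by parts while you use the substitution $u = 1 - x/A$ and a Beta integral; both yield $E[Y_i] = A/N$, $E[Y_i^2] = 2A^2/(N(N+1))$, and the stated variance.
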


Like sampling on vertices, we need two lemmas to help prepare the proofs.

\begin{lemma}
\label{lem:3}
Let $k=\varepsilon (\frac{n^2(n-1)}{A})^{\frac{2}{3}}$. Then
$$\p[\sum_{i=1}^k Y_i \geq cn] \leq \frac{\varepsilon^3}{(c-\varepsilon)^2}.$$
\end{lemma}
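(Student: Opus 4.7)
The plan is to port the proof of Lemma~\ref{lem:1} to the pair-sampling setting, with the effective domain size inflated from $n$ to $N := n(n-1)$. By Corollary~\ref{cor:ev2} we have $E[Y_i] = A/N$ and $Var[Y_i] = (N-1)A^{2}/(N^{2}(N+1)) < A^{2}/N^{2}$, which are exactly the moments of $X_i$ with $n$ replaced by $N$, so a Chebyshev-based estimate should go through once the scalings are lined up.

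First I would introduce $s := (A/(nN))^{1/3}$, so that $A = nN s^{3}$ and the stated sample size rewrites as $k = \varepsilon(n^{2}(n-1)/A)^{2/3} = \varepsilon(nN/A)^{2/3} = \varepsilon/s^{2}$. This gives $E[S_k] = k\,E[Y_i] = (\varepsilon/s^{2})(n s^{3}) = \varepsilon n s$ and $Var[S_k] = k\,Var[Y_i] < (\varepsilon/s^{2})(n^{2} s^{6}) = \varepsilon n^{2} s^{4}$, where $S_k := \sum_{i=1}^{k} Y_i$. Since $A \le n^{2} \le n^{2}(n-1)$ for $n \ge 2$, we have $s \le 1$, so $E[S_k] \le \varepsilon n$ and $cn - E[S_k] = (c - \varepsilon s)n \ge (c - \varepsilon)n$.

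Next I would apply Chebyshev's inequality to get
\[
\p[S_k \ge cn] \le \p\!\left[\left|S_k - E[S_k]\right| \ge cn - E[S_k]\right] \le \frac{Var[S_k]}{(cn - E[S_k])^{2}} \le \frac{\varepsilon s^{4}}{(c-\varepsilon)^{2}}.
\]
Finally, the lemma is only informative when $k \ge 1$; but $k = \varepsilon/s^{2} \ge 1$ forces $s^{2} \le \varepsilon$, so $s^{4} \le \varepsilon^{2}$, and substituting yields the claimed $\p[S_k \ge cn] \le \varepsilon^{3}/(c-\varepsilon)^{2}$.

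The main subtlety to watch for is that the $n$ in the threshold $cn$ (inherited from the stopping rule of Algorithm~\ref{alg}) is \emph{not} the same as the population parameter $N = n(n-1)$ of the $Y_i$'s; this asymmetry is precisely why the sample size $k$ contains the hybrid factor $n^{2}(n-1) = nN$ rather than $N^{2}$, and it is the place where one could easily misapply a naive $n \mapsto N$ substitution. Once that scaling is correctly identified via the variable $s$, the remainder is a direct transcription of Lemma~\ref{lem:1}'s argument and I do not anticipate any further technical obstacle.
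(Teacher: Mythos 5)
Your proof is correct, but it follows a genuinely different route from the paper's. The paper centers the sum, uses the pigeonhole observation that $\sum_{i=1}^k (Y_i - \frac{A}{n(n-1)}) \geq (c-\varepsilon)n$ forces some single term to exceed $\frac{c-\varepsilon}{k}n$, then applies a union bound and Chebyshev's inequality to each $Y_i$ individually; the resulting bound is $\frac{k^3}{(c-\varepsilon)^2 n^2}Var[Y_i] = \frac{\varepsilon^3(n(n-1)-1)}{(c-\varepsilon)^2(n(n-1)+1)} \leq \frac{\varepsilon^3}{(c-\varepsilon)^2}$ with no further conditions. You instead apply Chebyshev once to the whole sum $S_k$, using $Var[S_k]=k\,Var[Y_i]$ (which silently invokes independence of the samples --- a fair assumption here, and one the paper itself uses in Lemmas \ref{lem:2} and \ref{lem:4}, though notably \emph{not} in its proof of this lemma, whose per-term Chebyshev needs no independence). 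Your route yields the intermediate bound $\frac{\varepsilon s^4}{(c-\varepsilon)^2}$ with $s^2 = \varepsilon/k$, i.e.\ $\frac{\varepsilon^3}{k^2(c-\varepsilon)^2}$, which is \emph{tighter} than the paper's bound by a factor of $k^2$; the price is that you must invoke $k\geq 1$ (equivalently $s^2\leq\varepsilon$) to recover the stated $\varepsilon^3$ form, a condition that is anyway needed for the sum $\sum_{i=1}^k Y_i$ to be meaningful. Your closing remark about the hybrid scaling $n^2(n-1)=nN$ versus $N^2$ correctly identifies the one place a naive substitution would go wrong. In short: both proofs are valid; yours is shorter and sharper but leans on independence and the $k\geq 1$ caveat, while the paper's is looser but more self-contained.
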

\begin{proof}
See the Appendix.
\end{proof}

\begin{lemma}
\label{lem:4}
Let $k\geq\varepsilon (\frac{n^2(n-1)}{A})^{\frac{2}{3}}$ and $d>0$. Then
$$\p[\bigg|\frac{n(n-1)}{k}\sum_{i=1}^k Y_i - A\bigg|\geq dA] \leq \frac{1}{\varepsilon d^2} \left(\frac{A}{n^2(n-1)}\right)^{\frac{2}{3}}.$$
\end{lemma}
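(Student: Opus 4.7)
\emph{Proof proposal.} The plan is to mirror the proof of Lemma~\ref{lem:2}, replacing the single-vertex population size $n$ by the pair population size $n(n-1)$, and swapping in the moments of $Y_i$ from Corollary~\ref{cor:ev2} for those of $X_i$ from Corollary~\ref{cor:ev}.

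First I would set $Z = \frac{n(n-1)}{k}\sum_{i=1}^k Y_i$. By linearity and $E[Y_i] = A/(n(n-1))$ from Corollary~\ref{cor:ev2}, the estimator is unbiased, $E[Z] = A$. Since Algorithm~\ref{alg} draws the pairs $(u,v)$ independently at each iteration, $Var[Z] = \frac{(n(n-1))^2}{k}\,Var[Y_i]$; substituting the variance from Corollary~\ref{cor:ev2} simplifies to
$$Var[Z] \;=\; \frac{n(n-1)-1}{k\,(n(n-1)+1)}\,A^2 \;<\; \frac{A^2}{k},$$
where the strict inequality is immediate from $(n(n-1)-1)/(n(n-1)+1)<1$.

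Next I would apply Chebyshev's inequality to $Z$ around its mean $A$, which gives $\p[|Z - A| \geq dA] \leq Var[Z]/(dA)^2 \leq 1/(k d^2)$. Finally, plugging in the hypothesis $k \geq \varepsilon (n^2(n-1)/A)^{2/3}$ turns $1/k$ into $\varepsilon^{-1}(A/(n^2(n-1)))^{2/3}$, which is exactly the claimed bound.

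I do not expect a real obstacle: the argument is a routine variance-to-tail estimate, and the substitution $n \mapsto n(n-1)$ propagates cleanly through every step. The one modeling point worth flagging is the i.i.d.\ assumption on the $Y_i$'s, which is justified by the independent pair draws in Algorithm~\ref{alg}; if in a practical implementation pairs were instead drawn without replacement, a small finite-population correction would appear in $Var[Z]$ but would only shrink the bound, so the stated inequality would still hold.
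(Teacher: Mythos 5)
Your proposal is correct and matches the paper's own proof in substance: both apply Chebyshev's inequality to the centered, rescaled sum using the variance from Corollary~\ref{cor:ev2}, bound the factor $(n(n-1)-1)/(n(n-1)+1)$ by $1$, and then substitute the hypothesis on $k$. Phrasing the argument in terms of the estimator $Z$ rather than rescaling the event is only a cosmetic difference, and your explicit remark about the i.i.d.\ assumption is a point the paper leaves implicit.
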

\begin{proof}
See the Appendix.
\end{proof}

Now we can summarize the performance result of sampling on pairs of vertices.

\begin{theorem}
\label{thm:main2}
For $0<\varepsilon < 1/2$, if $BC(v) = n^2/t$, $t\geq 1$, then with probability more than $1-\left(1+\frac{1}{(2c-1)^2}\right)\varepsilon$, BC can be estimated within a factor of $\frac{1}{\varepsilon}\left(\frac{1}{t(n-1)}\right)^{\frac{1}{3}}$ with $\varepsilon t^{\frac{2}{3}}(n-1)^{\frac{1}{3}}$ samples.
\end{theorem}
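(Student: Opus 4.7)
The plan is to mirror the argument used to establish Theorem~\ref{thm:main}, substituting the pair-sampling analogues (Lemmas~\ref{lem:3} and~\ref{lem:4}) for Lemmas~\ref{lem:1} and~\ref{lem:2}. Set $A=BC(v)=n^{2}/t$ so that the critical sample count appearing inside both pair-sampling lemmas is $k^{*}=\varepsilon\bigl(n^{2}(n-1)/A\bigr)^{2/3}=\varepsilon\bigl(t(n-1)\bigr)^{2/3}$; this is the natural reference point dividing the two failure modes of Algorithm~\ref{alg}.

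First, I would bound the probability that the while-loop of Algorithm~\ref{alg} stops prematurely, i.e.\ that the accumulator $S$ crosses the threshold $cn$ before $k^{*}$ samples have been drawn. This event is exactly $\sum_{i=1}^{k^{*}}Y_i\geq cn$, so Lemma~\ref{lem:3} gives the bound $\varepsilon^{3}/(c-\varepsilon)^{2}$. As in the proof of Theorem~\ref{thm:main}, the hypothesis $\varepsilon<1/2$ converts this cleanly into $\varepsilon/(2c-1)^{2}$.

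Next, I would bound the probability that, once at least $k^{*}$ samples have been drawn, the returned estimate $n(n-1)S/k$ deviates from $A$ by more than the announced factor. Plugging $d=\tfrac{1}{\varepsilon}\bigl(1/(t(n-1))\bigr)^{1/3}$ into Lemma~\ref{lem:4} and substituting $A=n^{2}/t$ is a routine cancellation: the factor $(t(n-1))^{2/3}$ produced by $1/d^{2}$ kills the $(1/(t(n-1)))^{2/3}$ coming from $(A/(n^{2}(n-1)))^{2/3}$, and the right-hand side collapses to exactly $\varepsilon$. A union bound over the two bad events then yields the claimed success probability $1-\bigl(1+1/(2c-1)^{2}\bigr)\varepsilon$.

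The step I expect to be the main obstacle, rather than any calculation, is the bookkeeping argument that these two bad events truly exhaust the algorithm's ways of failing: one must verify that if the loop neither exits before $k^{*}$ samples nor produces a relative error exceeding $d$ at any $k\geq k^{*}$, then the final output genuinely lies within the advertised factor of $A$. Once that dichotomy is pinned down, the remainder is a direct transcription of the Theorem~\ref{thm:main} argument with $n$ replaced by $n(n-1)$ and $t$ by $t(n-1)$ wherever the ratio $n^{2}(n-1)/A$ appears.
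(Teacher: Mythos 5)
Your proposal follows exactly the paper's route: the same two failure events, Lemma~\ref{lem:3} for premature termination (converted to $\varepsilon/(2c-1)^2$ via $\varepsilon<1/2$), Lemma~\ref{lem:4} with $d=\frac{1}{\varepsilon}\left(\frac{1}{t(n-1)}\right)^{1/3}$ for the deviation event, and a union bound. (The paper's own proof literally cites Lemmas~\ref{lem:1} and~\ref{lem:2} here, but it is plainly using their pair-sampling analogues, as you do.) Two remarks. First, your algebra is more careful than the paper's and in fact exposes a slip in the statement being proved: $\varepsilon\left(\frac{n^2(n-1)}{A}\right)^{2/3}=\varepsilon\bigl(t(n-1)\bigr)^{2/3}=\varepsilon t^{2/3}(n-1)^{2/3}$, whereas the theorem (and the paper's proof, which performs this same substitution) claims $\varepsilon t^{2/3}(n-1)^{1/3}$ samples. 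What your argument actually establishes is the theorem with the sample count corrected to $\varepsilon t^{2/3}(n-1)^{2/3}$; the deviation bound still collapses to exactly $\varepsilon$ as you computed, so nothing else changes. Second, the ``bookkeeping'' issue you flag as the main obstacle --- that the two bad events exhaust the ways the adaptive algorithm can fail, and that Lemma~\ref{lem:4} (stated for fixed $k$) can legitimately be applied at the random stopping time --- is not resolved in the paper either: the proof of Theorem~\ref{thm:main} dismisses it in one unsupported sentence and the proof of Theorem~\ref{thm:main2} omits it entirely. So your proposal matches the paper's argument, and the gap you identify is a real gap in the original as well, not a defect specific to your write-up.
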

\begin{proof}
We first estimate the probability that our algorithm terminates with $k=\varepsilon (\frac{n^2(n-1)}{A})^{2/3}$ samples. Lemma \ref{lem:1} guarantees that this probability is less than $\varepsilon^3/(c-\varepsilon)^2$. Since $\varepsilon < 1/2$, we have
$$\frac{\varepsilon^3}{(c-\varepsilon)^2} \leq \varepsilon \frac{1/4}{(c-1/2)^2}=\frac{\varepsilon}{(2c-1)^2}.$$
We next estimate the probability that our algorithm fails to estimate within the factor. By setting $d=\frac{1}{\varepsilon}\left(\frac{1}{t(n-1)}\right)^{\frac{1}{3}}$, Lemma \ref{lem:2} guarantees that the probability of such failure is at most $\varepsilon$, if the number of samples $k\geq \varepsilon (\frac{n^2(n-1)}{A})^{2/3}=\varepsilon t^{2/3}(n-1)^{\frac{1}{3}}$.
\end{proof}

Similarly, we have the following corollary.
\begin{corollary}
For $0<\varepsilon < 1/2$, if $BC(v) = n^2/t$, $t\geq 1$, then with probability more than $1-2\varepsilon$, BC can be estimated within a factor of $\frac{1}{\varepsilon}\left(\frac{1}{t(n-1)}\right)^{\frac{1}{3}}$ with $\varepsilon t^{\frac{2}{3}}(n-1)^{\frac{1}{3}}$ samples.
\end{corollary}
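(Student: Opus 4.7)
The plan is to specialize Theorem \ref{thm:main2} to the case $c = 1$, which is the smallest value of $c$ permitted by the stopping rule of Algorithm \ref{alg}. The factor $\frac{1}{\varepsilon}\left(\frac{1}{t(n-1)}\right)^{1/3}$ and the sample count $\varepsilon t^{2/3}(n-1)^{1/3}$ in the corollary do not depend on $c$, so the only real computation is to evaluate the coefficient $\left(1 + \frac{1}{(2c-1)^2}\right)$ at $c=1$. Substituting gives $(2c-1)^2 = 1$, and hence the coefficient of $\varepsilon$ equals $1 + 1 = 2$, producing the desired $1 - 2\varepsilon$ success probability.

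First, I would record that $c = 1$ is admissible for Algorithm \ref{alg}, which only requires $c \geq 1$ in its while-loop guard (the relaxation from Bader's $c \geq 2$ discussed in the footnote). Then I would invoke Theorem \ref{thm:main2} verbatim with this choice of $c$: the same two events (premature termination before reaching $k$ samples, and failure to estimate within the prescribed factor) are bounded by $\frac{\varepsilon}{(2c-1)^2}$ and $\varepsilon$ respectively, so a union bound yields the stated probability.

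The only point worth a quick sanity check is that the intermediate bound $\frac{\varepsilon^3}{(c-\varepsilon)^2} \leq \frac{\varepsilon}{(2c-1)^2}$ used in the proof of Theorem \ref{thm:main2} remains valid at $c = 1$. Since $\varepsilon < 1/2$, we have $c - \varepsilon > 1/2 = (2c-1)/2 > 0$, so $(c-\varepsilon)^2 > (2c-1)^2/4$, and the inequality goes through. There is no serious obstacle: the corollary is a direct parameter specialization, exactly analogous to the corollary that follows Theorem \ref{thm:main} in the single-vertex sampling section.
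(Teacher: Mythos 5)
Your proposal is correct and matches the paper's approach exactly: the paper obtains this corollary by setting $c=1$ in Theorem \ref{thm:main2}, just as the analogous corollary in the single-vertex section is obtained from Theorem \ref{thm:main}, and the coefficient $1+\frac{1}{(2c-1)^2}$ evaluates to $2$. Your sanity check that $\frac{\varepsilon^3}{(c-\varepsilon)^2}\leq\frac{\varepsilon}{(2c-1)^2}$ still holds at $c=1$ is a nice touch but raises no issue beyond what the paper already uses.
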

Compared with sampling on vertices, the factor is much less $\frac{1}{\varepsilon}\left(\frac{1}{t(n-1)}\right)^{\frac{1}{3}} < 1/(\varepsilon t^{1/3})$, but we use more samples. Fortunately, sampling on pairs has less time cost than sampling on vertices, i.e., it is not necessary to calculate the single source shortest paths (SSSP) to all the vertices in the graph. When the destined vertex is reached, the SSSP calculation can stop earlier without exhausting all the vertices in the graph. However, sampling on vertices requires SSSP to all the vertices for each sample. In this light, we believe sampling on pairs of vertices has better performance than sampling on vertices. We will use simulations on real-world data to verify this fact.

\section{Evaluations}
\label{sec:exp}
In this section, we will present our experimental results of revised Bader's method and our proposed algorithms. We will consider the accuracy of the estimation and running time, i.e., the approximation factors and the number of samplings.

{\it Experiment setup}: We implemented our algorithm, and used the implementation of Brandes' \cite{brandes2001faster} from igraph \cite{csardi2006igraph}. All the implementations were written in C, and were tested on uniform system: 4-core Intel Xeon Processors with 16GB RAM, and SMP Linux with kernel version 3.19.8. Each measurement was averaged over 10 instances.

{\it Datasets}: We use Stanford Large Network Dataset Collection (SLNDC) \cite{leskovec2014snap}. We focus one two real-world graphs, \texttt{oregon1-010331} and \texttt{ego-Facebook}. Note that \texttt{oregon1-010331} has been used by Riondato \cite{riondato2014fast}, and \texttt{ego-Facebook} has fewer vertices but denser edges. Table \ref{tab:graphs} gives details about the two graphs.
\begin{table}[!t]
\centering
\begin{tabular}{c || c | c | c}
 & $|V|$ & $|E|$ & avg. clustering coef.\\
\hline
\texttt{oregon1-010331} & 10670 & 22002 & 0.2970 \\
\texttt{ego-Facebook} & 4039 & 88234 & 0.6055 \\
\end{tabular}
\caption{Details from SLNDC.}
\label{tab:graphs}
\end{table}

\subsection{Revised Bader's Algorithm}

\begin{figure*}[!t]
\centering
\begin{minipage}[t]{0.3\textwidth}
\centering
\includegraphics[width=\textwidth]{./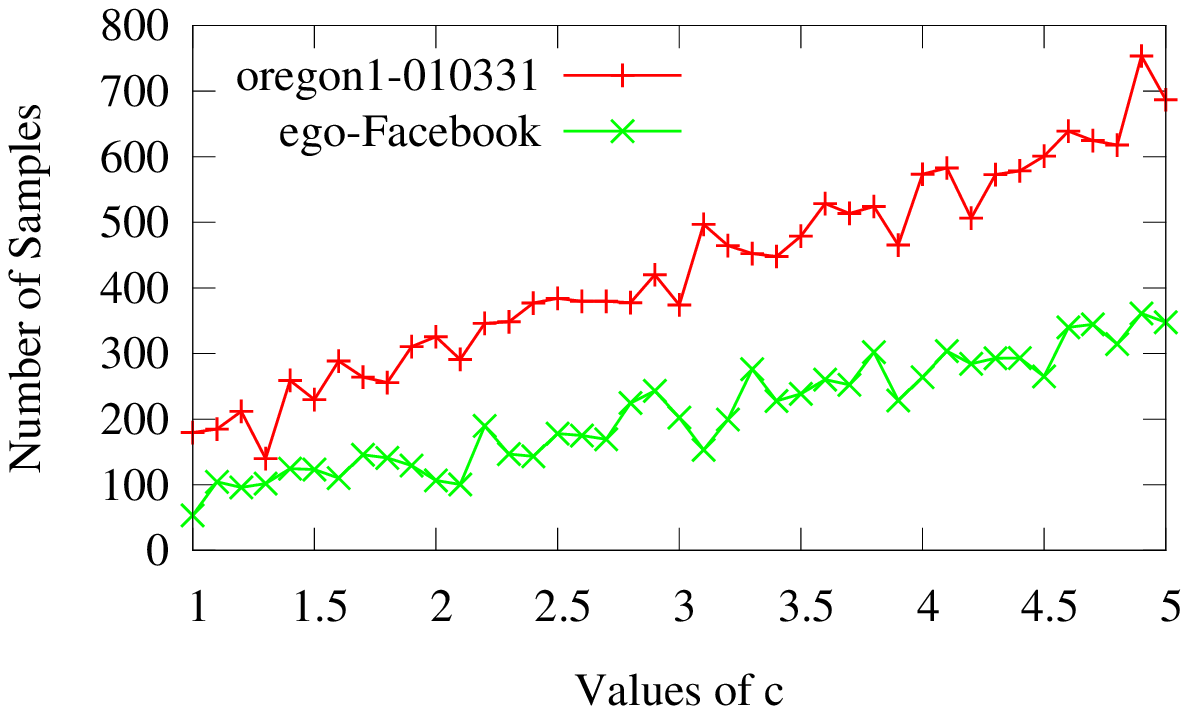}
\caption{$c$ $\sim$ \# samples.}
\label{fig:bader_c_sample}
\end{minipage}
\hspace{10pt}
\begin{minipage}[t]{0.3\textwidth}
\centering
\includegraphics[width=\textwidth]{./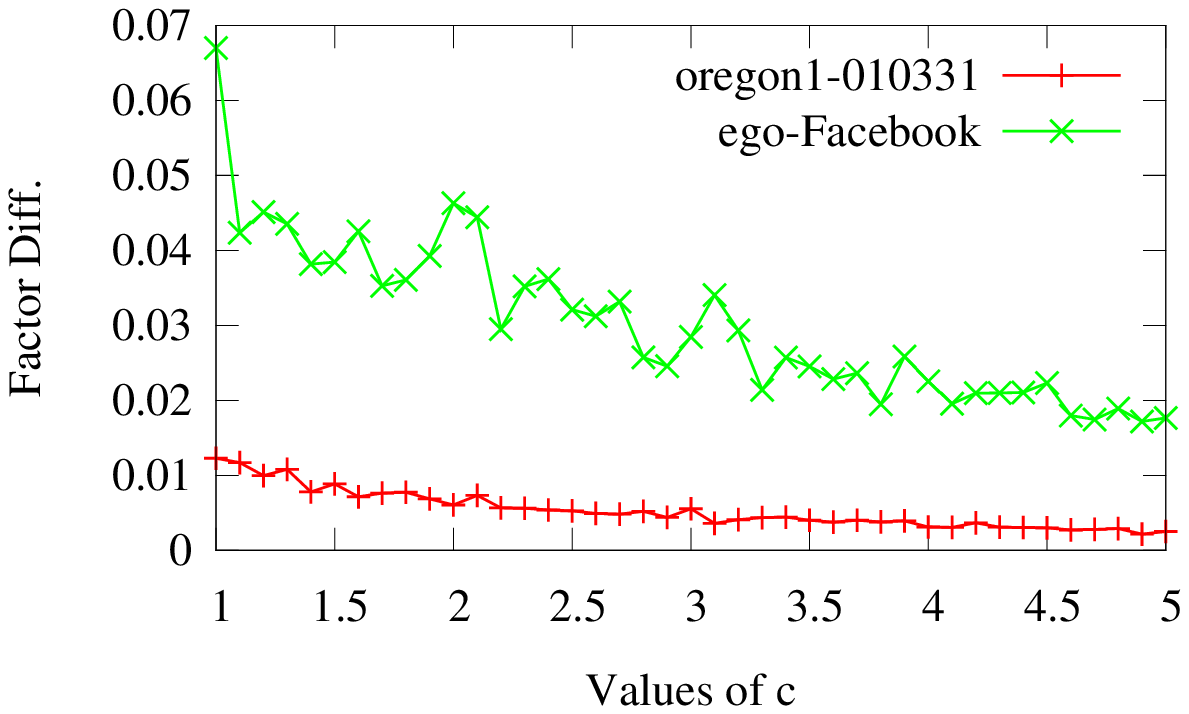}
\caption{$c$ $\sim$ factor diff.}
\label{fig:bader_c_factor}
\end{minipage}
\hspace{10pt}
\begin{minipage}[t]{0.3\textwidth}
\centering
\includegraphics[width=\textwidth]{./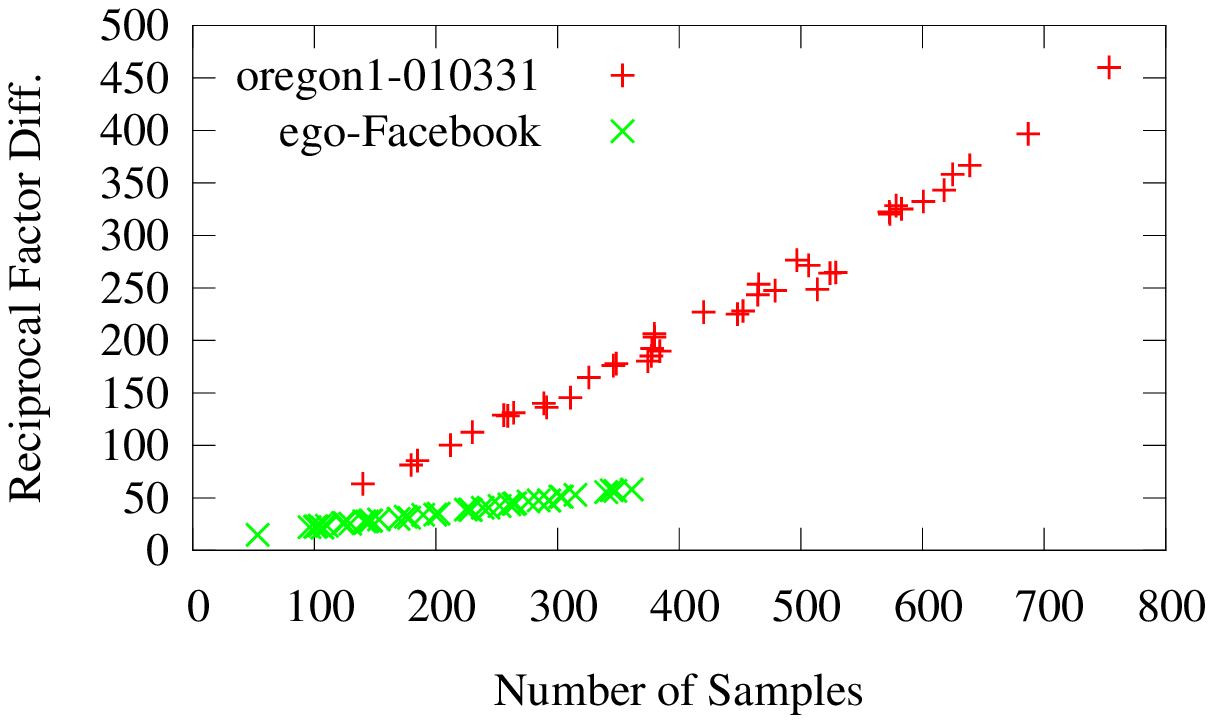}
\caption{\# samples $\sim$ 1/factor diff.}
\label{fig:bader_sample_factor}
\end{minipage}
\end{figure*}

We present the results for our revised Bader's algorithm. 
Figure \ref{fig:bader_c_sample} gives that the number of samplings increases if the parameter $c$ in the algorithm increases. The growth is almost linear, since the curves present steady slopes. This is because in the algorithm, higher $c$ gives larger threshold when the loop terminates, and hence admits more samplings. Note that the range of $c$ starts from 1.0. That is different from \cite{bader2007approximating}, which required $c>2$. The total number of samplings is no more than 800 for both graphs, which is much less than the number of vertices.

Figure \ref{fig:bader_c_factor} gives that the approximation factor difference decreases if $c$ increases. The factor different is defined as the distance between the factor and 1. Clearly if the factor is 1, then our estimation gives the exact correct BC value. This is presented in Theorem \ref{thm:main}. That is, if the number of samples $S=\varepsilon t^{2/3}$, then the factor is less than $b = 1/(\varepsilon t^{1/3}) = t^{1/3}/S$. $b$ should decrease as $S$ increases. It also indicates that the revised Bader's algorithm gives better approximations for sparse graphs. The factor differences are all less than 7\% for both cases.

Figure \ref{fig:bader_sample_factor} gives that the reciprocal of factor difference is almost linearly related to the number of samples. From Theorem \ref{thm:main}, the upper bound $b$ of factor satisfies $b = t^{1/3}/S$, or $1/b \propto S$. Figure \ref{fig:bader_sample_factor} indicates that the actual reciprocal of factor difference seems to be proportional to the number of samplings as well. 

Figures \ref{fig:bader_c_factor} and \ref{fig:bader_sample_factor} together indicates that the revised Bader's algorithm still works well for $1<c<2$.

\subsection{Samplings on Pairs of Vertices}

\begin{figure*}[!t]
\centering
\begin{minipage}[t]{0.3\textwidth}
\centering
\includegraphics[width=\textwidth]{./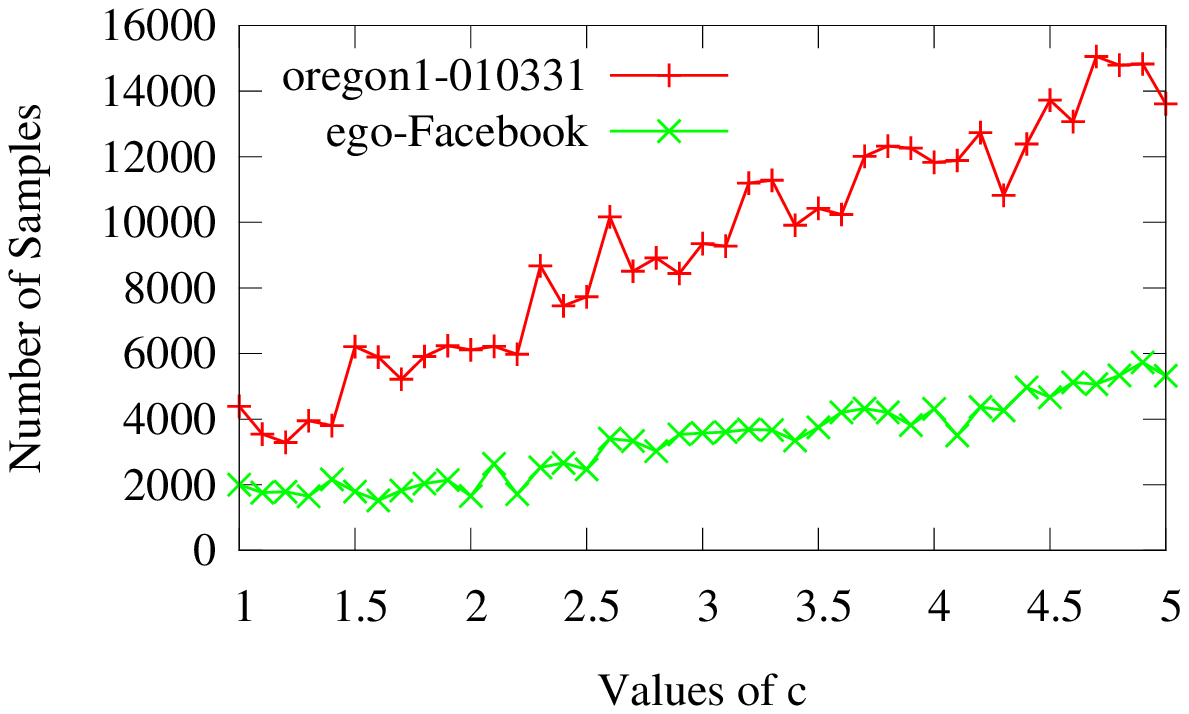}
\caption{$c \sim$ \# samples.}
\label{fig:rion_c_sample}
\end{minipage}
\hspace{10pt}
\begin{minipage}[t]{0.3\textwidth}
\centering
\includegraphics[width=\textwidth]{./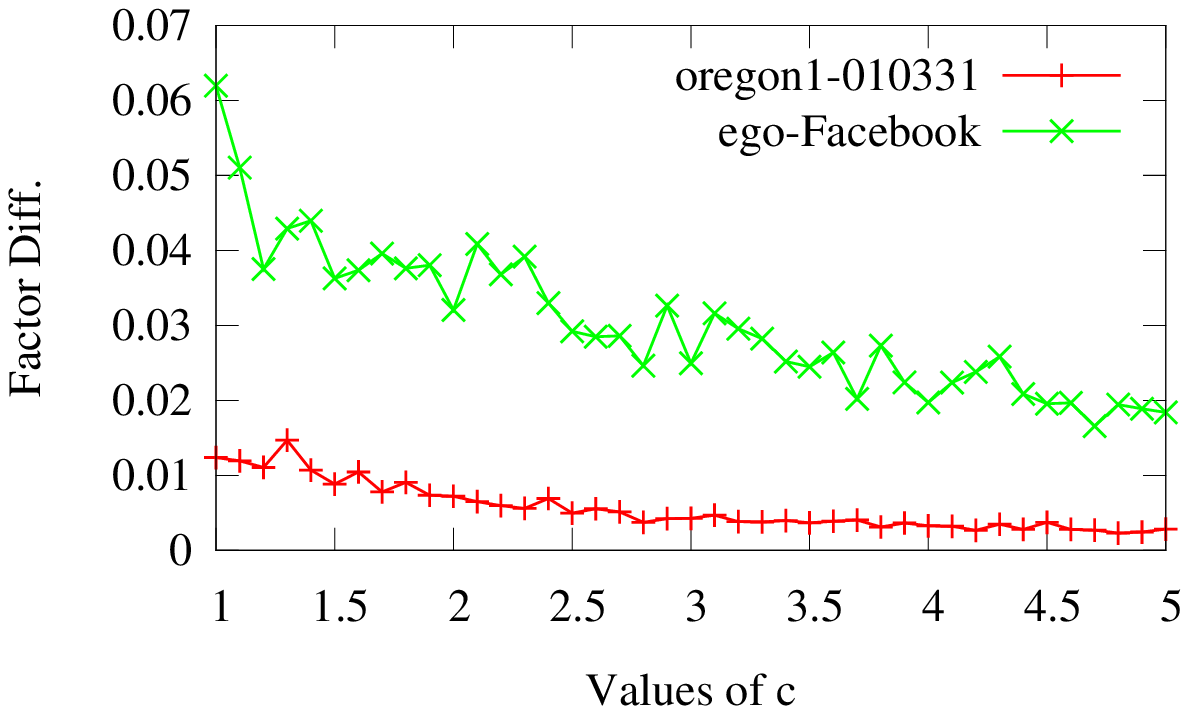}
\caption{$c \sim$ factor diff.}
\label{fig:rion_c_factor}
\end{minipage}
\hspace{10pt}
\begin{minipage}[t]{0.3\textwidth}
\centering
\includegraphics[width=\textwidth]{./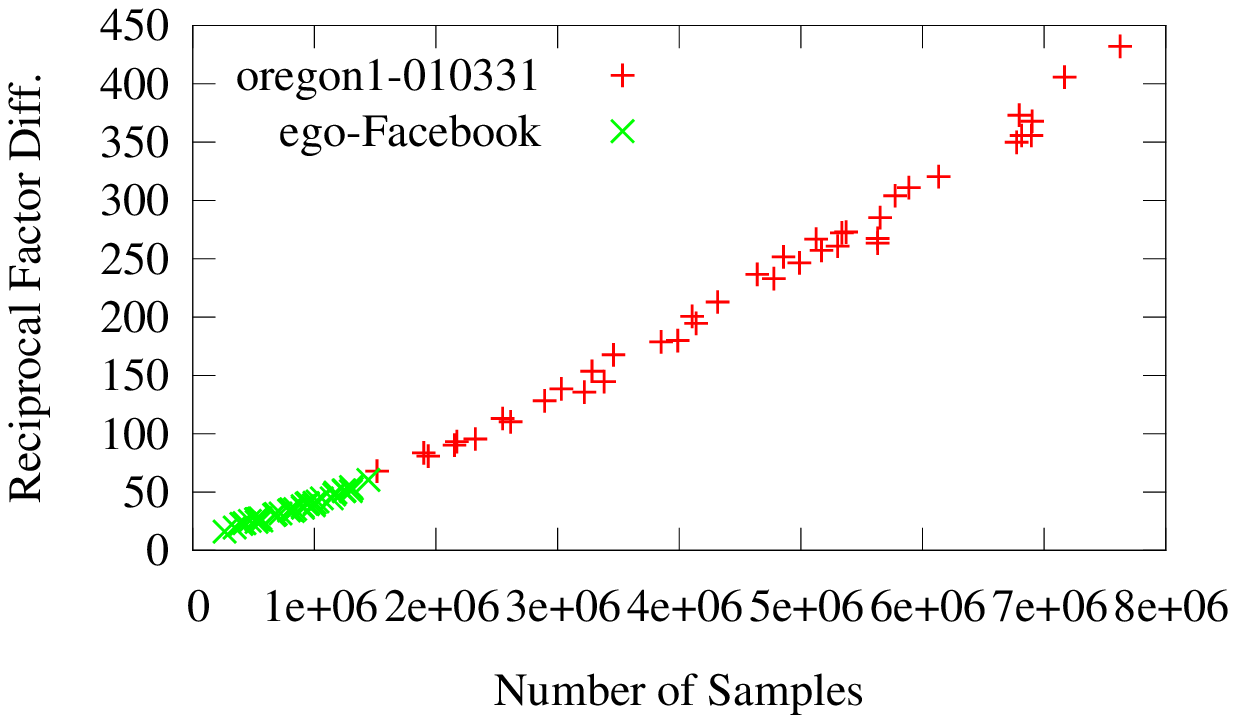}
\caption{\# samples $\sim$ 1/factor diff.}
\label{fig:rion_sample_factor}
\end{minipage}
\end{figure*}

Figure \ref{fig:rion_c_sample} gives that the number of samplings increases if the parameter $c$ in the algorithm increases. The growth is almost linear, since the curves present steady slopes. Similarly to the first algorithm, this is because in the algorithm, higher $c$ gives larger threshold when the loop terminates, allowing more samplings. The number of samplings for graph with more vertices grows faster, since based on Theorem \ref{thm:main2}, the number of samples is proportional to $(n-1)^{1/3}$, where $n$ denotes the number of vertices. The sampled paths are no more than 16000 for both graphs.

Figure \ref{fig:rion_c_factor} gives that the approximation factor difference decreases if $c$ increases. This is presented in Theorem \ref{thm:main2}. That is, if the number of samples $S'=\varepsilon t^{2/3}(n-1)^{1/3}$, then the factor is less than $b' = 1/(\varepsilon (t(n-1))^{1/3}) = t^{1/3}/S'$. $b'$ should decrease as $S'$ increases. It also indicates that our algorithm gives better approximations for sparse graphs. The factor differences are all less than 5\% for both cases, which is slightly better than the previous revised Bader's.

Figure \ref{fig:rion_sample_factor} gives that the reciprocal of factor difference is almost linearly related to the number of samples. From Theorem \ref{thm:main2}, the upper bound $b'$ of factor satisfies $b' = t^{1/3}/S'$, or $1/b' \propto S'$. It also indicates that the actual reciprocal of factor difference seems to be proportional to the number of samplings as well.

\section{Future Works}
One open problem which is revealed in the experiments is that it seems the reciprocal of factor difference is proportional to the number of samplings. The question is whether this conjecture can be shown by theory. Theorems \ref{thm:main} and \ref{thm:main2} both only discuss the upper bound of factors. How to more tightly bound the factors still needs further research.

\section{Conclusions}
In this project, we have revisited Bader's BC estimating algorithm, given more tight bounds on the estimations, and improved the performance of the approximation algorithm, i.e., fewer samples and lower factors. We have also proposed an \emph{adaptive} algorithm with samplings on pairs of vertices. With rigorous reasoning, we have shown that our algorithm can also give bounded approximation factors and number of samplings. To evaluate our results, we have also conducted extensive experiments using real-world datasets. It turns out that both the algorithms can achieve ideal performance. We have verified that our work can improve the current BC approximation algorithm with better parameters, i.e., higher successful probabilities, lower factors and fewer samples.

\section{Acknowledgement}
We are indebted for Matteo Riondato, Vikas Kumar, Rishi Ranjan Singh and Jian Wu for their valuable suggestions and comments on this paper.

\bibliographystyle{acm}
\bibliography{./lit}

\section{Appendix}

\subsection{Proof of Lemma \ref{lem:df}}
\begin{proof}
Note that $\sum_{v_i\in V, v_i\not= v}X_i = A$. Suppose the interval $(0, A)$ is randomly divided into $n$ sub-intervals, each of which represents the length of $X_i$. If the $i$-th sub-interval is larger than $x$, then all the $n-1$ cutting points are distributed outside a sub-interval, whose length is $x$. Formally, we have
$$\p [X_i>x] = \left( 1-\frac{x}{A} \right)^{n-1}$$
and $\p [X_i\leq x] = 1-\p [X_i>x]$ gives the desired cdf. By computing the derivative over $0<x<A$, we have the pdf as desired.
\end{proof}

\subsection{Proof of Corollary \ref{cor:ev}}
\begin{proof}
$$
\begin{aligned}
E[X_i] &= \int_0^A x\frac{n-1}{A} \left( 1-\frac{x}{A} \right)^{n-2} dx \\
&= -x\left( 1-\frac{x}{A}\right)^{n-1}\bigg|_0^A + \int_0^A \left( 1-\frac{x}{A}\right)^{n-1} dx \\
&= -\frac{A}{n} \left( 1-\frac{x}{A} \right)^n\bigg|_0^A = \frac{A}{n}.
\end{aligned}
$$
$$
\begin{aligned}
E[X_i^2] &= \int_0^A x^2\frac{n-1}{A} \left( 1-\frac{x}{A} \right)^{n-2} dx \\
&= -x^2\left( 1-\frac{x}{A}\right)^{n-1}\bigg|_0^A + 2\int_0^A x\left( 1-\frac{x}{A}\right)^{n-1} dx \\
&= \frac{2A}{n}\int_0^A \left( 1-\frac{x}{A}\right)^{n} dx = \frac{2A^2}{n(n+1)}.
\end{aligned}
$$
$$Var[X_i] = E[X_i^2] - E[X_i]^2 = \frac{2A^2}{n(n+1)} - \frac{A^2}{n^2} = \frac{n-1}{n^2(n+1)}A^2.$$
\end{proof}
Since $0\leq A \leq n^2$, it is clear our model agrees with \cite{bader2007approximating}'s bounds on $E[X_i]$ and $Var[X_i]$.
However for $E[X_i^2]$, our model gives the upper bound: $E[X_i^2] \leq 2A$. This can be explained: it is possible for $Var[X_i]$ to reach $A$ very closely (e.g., when $A$ is very close to $n^2$), and since $E[X_i^2] = Var[X_i] + E[X_i]^2 = Var[X_i] + A^2/n^2$, it is possible that $Var[X_i]$ exceeds $A$. Luckily neither this paper nor Bader's work uses this bound on $E[X_i^2]$ in the rest reasoning.

\subsection{Proof of Lemma \ref{lem:1}}
\begin{proof}
Note that $BC(v) = A \leq n^2$. Then we have
$$
\begin{aligned}
&\p[\sum_{i=1}^k X_i \geq cn] = \p[\sum_{i=1}^k \left( X_i - \frac{A}{n} \right) \geq cn-\frac{kA}{n}] \\
= & \p[\sum_{i=1}^k \left( X_i - \frac{A}{n} \right) \geq cn-\varepsilon n(\frac{A}{n^2})^{\frac{1}{3}}] \\
\leq & \p[\sum_{i=1}^k \left( X_i - \frac{A}{n} \right) \geq cn-\varepsilon n] &(A\leq n^2)\\
\leq & \p[\bigvee_{i=1}^k\left( X_i - \frac{A}{n} \right) \geq \frac{c-\varepsilon}{k}n] &\textrm{(at least one $X_i$ exceeds the average)} \\
\leq & \sum_{i=1}^k \p[\left( X_i - \frac{A}{n} \right) \geq \frac{c-\varepsilon}{k}n] &\textrm{(union bound)} \\
\leq & \sum_{i=1}^k \frac{k^2}{(c-\varepsilon)^2n^2} Var[X_i] &\textrm{(Chebychev's inequality)}\\
= & \frac{k^2}{(c-\varepsilon)^2n^2} \sum_{i=1}^k \frac{n-1}{n^2(n+1)}A^2 &\textrm{(Corollary \ref{cor:ev})} \\
= & \frac{\varepsilon^3 (n-1)}{(c-\varepsilon)^2 (n+1)} \leq \frac{\varepsilon^3}{(c-\varepsilon)^2}.
\end{aligned}
$$
\end{proof}

\subsection{Proof of Lemma \ref{lem:2}}
\begin{proof}
$$
\begin{aligned}
&\p[\bigg|\frac{n}{k}\sum_{i=1}^k X_i - A\bigg|\geq dA]
= \p[\bigg|\sum_{i=1}^k X_i - \frac{kA}{n}\bigg|\geq \frac{kdA}{n}] \\
= &\p[\bigg|\sum_{i=1}^k \left( X_i - \frac{A}{n}\right) \bigg|\geq \frac{kdA}{n}]
\leq \frac{n^2}{k^2d^2A^2}k Var[X_i] & \textrm{(Chebychev's inequality)}\\
= &\frac{n-1}{(n+1)kd^2} \leq \frac{1}{kd^2} = \frac{1}{\varepsilon d^2} \left(\frac{A}{n^2}\right)^{\frac{2}{3}}.
\end{aligned}
$$
\end{proof}

\subsection{Proof of Lemma \ref{lem:3}}
\begin{proof}
$$
\begin{aligned}
&\p[\sum_{i=1}^k Y_i \geq cn] \\
= &\p[\sum_{i=1}^k \left( Y_i - \frac{A}{n(n-1)} \right) \geq cn-\frac{kA}{n(n-1)}] \\
= & \p[\sum_{i=1}^k \left( Y_i - \frac{A}{n(n-1)} \right) \geq cn-\varepsilon n(\frac{A}{n^2(n-1)})^{\frac{1}{3}}] \\
\leq & \p[\sum_{i=1}^k \left( Y_i - \frac{A}{n(n-1)} \right) \geq cn-\varepsilon n] &(A\leq n^2)\\
\leq & \p[\bigvee_{i=1}^k\left( Y_i - \frac{A}{n(n-1)} \right) \geq \frac{c-\varepsilon}{k}n] &\textrm{(at least one $Y_i$ exceeds the average)} \\
\leq & \sum_{i=1}^k \p[\left( Y_i - \frac{A}{n(n-1)} \right) \geq \frac{c-\varepsilon}{k}n] &\textrm{(union bound)} \\
\leq & \sum_{i=1}^k \frac{k^2}{(c-\varepsilon)^2n^2} Var[Y_i] &\textrm{(Chebychev's inequality)}\\
= & \frac{k^2}{(c-\varepsilon)^2n^2} \sum_{i=1}^k \frac{n(n-1)-1}{n^2(n-1)^2(n(n-1)+1)}A^2 &\textrm{(Corollary \ref{cor:ev2})} \\
= & \frac{\varepsilon^3 (n(n-1)-1)}{(c-\varepsilon)^2 (n(n-1)+1)} \leq \frac{\varepsilon^3}{(c-\varepsilon)^2}. & \textrm{(Jensen's inequality)}
\end{aligned}
$$
\end{proof}
Alternatively one can use the bound on variance from \cite{bader2007approximating}: $Var[Y_i] < A$. Analogous computation can show the upper bound is still the same.

\subsection{Proof of Lemma \ref{lem:4}}
\begin{proof}
$$
\begin{aligned}
&\p[\bigg|\frac{n(n-1)}{k}\sum_{i=1}^k Y_i - A\bigg|\geq dA]\\
= &\p[\bigg|\sum_{i=1}^k Y_i - \frac{kA}{n(n-1)}\bigg|\geq \frac{kdA}{n(n-1)}] \\
= &\p[\bigg|\sum_{i=1}^k \left( Y_i - \frac{A}{n(n-1)}\right) \bigg|\geq \frac{kdA}{n(n-1)}]
\leq \frac{n^2(n-1)^2}{k^2d^2A^2}k Var[Y_i] & \textrm{(Chebychev's inequality)}\\
= &\frac{n(n-1)-1}{(n(n-1)+1)kd^2} \leq \frac{1}{kd^2} = \frac{1}{\varepsilon d^2} \left(\frac{A}{n^2(n-1)}\right)^{\frac{2}{3}}. & \textrm{(Jensen's inequality)}
\end{aligned}
$$
\end{proof}

\end{document}